\documentclass[journal]{IEEEtran}
\usepackage{bm}
\usepackage{verbatim}
\usepackage{amsfonts}
\usepackage{amssymb}
\usepackage{stfloats}
\usepackage{cite}
\usepackage{graphicx}
\usepackage{psfrag}
\usepackage{subfigure}
\usepackage{amsmath}
\usepackage{url}
\usepackage{array}
\usepackage{epstopdf}
\usepackage{authblk}
\usepackage{graphicx} 
\usepackage{mdframed}
\usepackage{amsthm}         
\usepackage{lipsum}
\usepackage{verbatim} 
\usepackage{authblk}
\usepackage{mathtools}
\usepackage{cuted}
\usepackage{authblk}
\usepackage{booktabs}
\usepackage{subfigure}
\usepackage{siunitx}
\usepackage{algorithm, algorithmic}
\usepackage{mathtools}
\usepackage{soul}

\newtheorem{Lemma}{Lemma}

\newtheorem{Remark}{Remark}

\usepackage{graphicx} % Required for inserting images

\begin{document}
\title{Parametric-Sensitivity Aware Retransmission for Efficient AI Downloading}

\author{{You~Zhou,~\IEEEmembership{Graduate Student Member,~IEEE}, 
{Qunsong Zeng,~\IEEEmembership{Member,~IEEE}}, 
and {Kaibin Huang,~\IEEEmembership{Fellow,~IEEE}}}
\vspace{-3mm}
\thanks{Y. Zhou, Q. Zeng and K. Huang are with the Department of Electrical and Electronic Engineering, The University of Hong Kong, Hong Kong SAR, China (Email: zhouyou@eee.hku.hk, qszeng@eee.hku.hk, huangkb@hku.hk). Corresponding authors: Q. Zeng; K. Huang.}}

\maketitle
\begin{abstract}
    The edge artificial intelligence (AI) applications in next-generation mobile networks demand efficient AI-model downloading techniques to support real-time, on-device inference. 
    However, transmitting high-dimensional AI models over wireless channels remains challenging due to limited communication resources. 
    To address this issue, we propose a parametric-sensitivity-aware retransmission (PASAR) framework that manages radio-resource usage of different parameter packets according to their importance on model inference accuracy, known as parametric sensitivity.
    Empirical analysis reveals a highly right-skewed sensitivity distribution, indicating that only a small fraction of parameters significantly affect model performance. 
    Leveraging this insight, we design a novel online retransmission protocol, i.e., the PASAR protocol, that adaptively terminates packet transmission based on real-time bit error rate (BER) measurements and the associated parametric sensitivity. 
    The protocol employs an adaptive, round-wise stopping criterion, enabling heterogeneous, packet-level retransmissions that preserve overall model functionality but reduce overall latency. 
    Extensive experiments across diverse deep neural network architectures and real-world datasets demonstrate that PASAR substantially outperforms classical hybrid automatic repeat request (HARQ) schemes in terms of communication efficiency and latency.
\end{abstract}

\begin{IEEEkeywords}
    Retransmission protocol, radio resource management, parametric sensitivity, AI model downloading.
\end{IEEEkeywords}

\section{Introduction}
The next-generation mobile network is expected to deliver artificial intelligence (AI) capabilities at the network edge, enabling user-specific services with low latency and high efficiency \cite{tong20226g}. 
Among the three use cases of integrated AI and communication as identified in 3GPP, one involves downloading AI models to end-user devices for on-device inference~\cite{3gpp}. 
Compared with server-based inference, on-device inference reduces end-to-end latency by avoiding uplink data transmission and preserves privacy by keeping raw data locally.
However, due to limited radio resources, wireless transmission of high-dimensional AI models becomes a critical bottleneck~\cite{wu2024efficient}. 
This challenge highlights the need for advanced communication protocols and resource management strategies that optimize transmission efficiency while maintaining the downloaded AI model's performance. 
Traditional communication techniques assume that all bits or symbols are equally important, leading to design objectives centered on maximizing transmission rate. 
While effective in general-purpose data transmission, this approach overlooks the unique characteristics of AI model parameters. 
In particular, model parameters exhibit varying degrees of sensitivity: some parameters, when distorted by noise, degrade the inference performance more significantly than others \cite{yan2024hessian}. 
This insight motivates a novel design approach: parametric-sensitivity-aware resource allocation. 
In this work, we leverage this principle to redesign the conventional automatic repeat request (ARQ) protocol for improving the efficiency of AI model downloading in resource-constrained wireless communication systems.

Edge AI primarily addresses the challenge of data-model separation, where user data is generated or collected on edge devices while AI models are stored on edge servers~\cite{letaief2021edge}. 
To address this issue, two main paradigms have emerged. 
The first approach, known as edge inference, involves uploading data (or extracted features) from the edge device to the edge server for processing, and the inference results are subsequently sent back to the device \cite{shao2021communication}. 
Such an approach is particularly advantageous given that the model is too large to be executed on edge devices, such as in the case of large language models (LLMs) or other complex models with high computational requirements. 
However, it heavily relies on the stability and reliability of the wireless connection, as it necessitates frequent uplink transmissions of data (or features) and timely downlink delivery of inference results~\cite{zhou2025communication}. 
This dependency can be problematic for mission-critical applications that require real-time responses, such as augmented/virtual reality (AR/VR)~\cite{ARVR} or remote healthcare~\cite{remotehealthcare}, particularly under poor wireless channel conditions. 
Moreover, transmitting raw data (or features) over the air raises significant concerns regarding data privacy and security, as sensitive information may be exposed to potential interception or unauthorized access~\cite{wu2024efficient}. 
The growing computational power of modern AI processors equipped at edge devices has spurred interest in the second approach: AI model downloading for local inference~\cite{huang2023situ}. 
In this paradigm, the edge server dynamically selects and transmits a task-specific AI model to the edge device, enabling on-device inference tailored to the user's needs~\cite{wu2024efficient}. 
This approach effectively addresses both real-time issues and privacy concerns inherent in the first approach \cite{hou2021model}. 
Furthermore, such a paradigm ensures real-time AI model updates for devices such that devices can promptly adapt to changes on user needs or context~\cite{huang2023situ}.

Despite its advantages, real-time AI model downloading still faces a communication bottleneck, primarily due to the high dimensionality of the model parameters required for accurate inference~\cite{ahmed2021hybrid}. 
This challenge is further exacerbated in the scenario envisioned by 3GPP, where edge servers must efficiently handle diverse and dynamic inference requests from multiple users~\cite{reinforcechallenges}. 
To address the issues, it is essential to develop communication-efficient strategies for AI model downloading that ensure both timely delivery and reliable on-device inference.  
One promising research direction lies in radio resource management~\cite{wen2019overview}. 
Aligning with this direction, we explore retransmission as a foundational time-allocation strategy to control the transmission reliability under noisy channel conditions. 
ARQ protocols, a retransmission scheme ubiquitously deployed in modern wireless systems, offer operational simplicity and reliability without requiring channel state information at the transmitter (CSIT), which can be difficult to obtain in practice\cite{ahmed2021hybrid}. 
Variants such as hybrid ARQ (HARQ) enhance this process by combining forward error correction with ARQ to improve error resilience~\cite{sheng2017performance, nadeem2021nonorthogonal}. 
However, traditional communication techniques are typically designed with an emphasis on either bit-level reliability or data rate maximization~\cite{ratereliabe}; they lack awareness of AI tasks and their characteristics. Therefore, the achieved efficiency is sub-optimal for large-scale model downloading. 
To fill the gap, we propose incorporating the heterogeneity of the importance levels of the model parameters into making retransmission decisions. 

The importance-aware retransmission approaches optimize radio resource allocation in edge AI systems by prioritizing the delivery of important data, features, or parameters. 
In the scenario of edge learning, data importance is investigated through importance measures in terms of uncertainty and diversity~\cite{huang2013active}. The resultant retransmission protocols preferentially acquire highly uncertain data so as to accelerate model convergence~\cite{liu2020wireless}. 
In the scenario of split inference, the metric of feature importance is introduced based on discriminant gain~\cite{bayeserror}. 
This metric helps to set packet priorities in progressive feature transmission, and thereby enhances the communication efficiency~\cite{lan2022progressive}. 
For parametric importance in the current scenario of AI downloading, the most relevant literature is sensitivity analysis, where the \emph{sensitivity} refers to the degree to which variations in individual model parameters affect the model’s output~\cite{yan2024hessian}. It can be quantified by the Hessians of the learning loss function with respect to individual model parameters. Their values can be evaluated and stored at the end of the training process~\cite{yan2022swim}.
Highly sensitive parameters are more critical than others, since perturbations to the former can lead to significant degradation in the inference accuracy. In the area of computing, parametric sensitivity has been explored in several domains, including neural network pruning, quantization, and model compression~\cite{molchanov2019importance}. 
However, the exploitation of sensitivity to boost communication efficiency is an uncharted area.

To bridge the gap, we propose in this work a parametric-sensitivity-aware retransmission (PASAR) protocol to enable efficient AI model downloading in resource-constrained systems. 
The key idea is to incorporate parameter sensitivity into retransmission decisions, ensuring that communication resources are used where they matter most.
Specifically, unlike traditional ARQ/HARQ schemes that treat all packets equally, the PASAR protocol adaptively allocates time resources to the transmission of parameter packets according to their sensitivity levels. By effectively leveraging the highly skewed sensitivity distribution, the design can significantly reduce retransmission latency without compromising the inference accuracy.

The key contributions and findings of this work are summarized as follows.
\begin{itemize}
    \item \textbf{Downloading loss analysis:} 
    We develop a Hessian-based performance metric called downloading loss to quantify the effect of bit errors in individual model parameters on the loss function. This metric extends the conventional definition\cite{yan2022swim}, to account for channel bit errors. Using this metric, we observe highly right-skewed sensitivity distributions across representative model architectures.
    These findings challenge the conventional uniform reliability paradigm in ARQ/HARQ schemes and provide theoretical justification for parametric sensitivity-based designs.
    \item \textbf{Sensitivity aware retransmission control:} 
    Based on the proceeding metric of downloading loss, we propose the PASAR protocol for efficient AI model downloading. The design approach abstracts the end-to-end retransmission process as an online multiple-choice knapsack problem (MCKP), aiming at downloading latency minimization subject to a global loss constraint.
    At the core of the protocol is an \emph{on-device retransmission control} module using stopping thresholds as in the MCKP framework. The novelty of the design lies in an algorithm for adapting thresholds based on real-time BER observations and parametric sensitivity.
    As a result, PASAR focuses radio resources on transmitting the small subset of high-sensitivity parameters to reduce latency without compromising on-device inference accuracy.
    \item \textbf{Extension to parameter pruning:}
    We discuss the extension of the PASAR protocol to investigate model pruning prior to downloading. A fundamental trade-off that underpins the design is identified. Specifically, while pruning decreases transmission latency by reducing the number of transmitted parameters, a too aggressive pruning can also compromise inference performance. Even though pruning may reduce the performance gains achievable with PASAR by reducing sensitivity skewness, PASAR continues to outperform conventional channel-aware ARQ schemes across different pruning ratios as shown by experiments.
    \item \textbf{Experiments:} 
    We validate the effectiveness of PASAR through benchmarking against conventional ARQ/HARQ schemes. The experimental results demonstrate consistent latency reduction under a constraint on downloading loss.
    However, as the sensitivity distribution approaches uniform due to aggressive pruning, this advantage diminishes and PASAR gradually converges to the conventional uniform retransmission schemes.
\end{itemize}
The remainder of this paper is organized as follows.
Section II presents the system model and performance metrics for AI model downloading. Section III analyzes the parametric sensitivity and explores its distribution in representative AI models. Section IV formulates the corresponding online retransmission control problem and presents the overview of PASAR protocol. Section V designs the PASAR's retransmission-control module. Section VI discusses an extension of PASAR with parameter pruning. The experimental results are evaluated in Section VII, and concluding remarks are given in Section VIII.

\vspace{-4mm}
\section{System and Metrics}
We consider an AI model downloading system comprising a server-device pair as illustrated in Fig. \ref{AI_downloading_system}. 
The edge server transmits the well-pretrained AI models to the local device for execution of user-specific inference tasks. 
Relevant system model and metrics are presented as follows.
\vspace{-3mm}
\subsection{AI Downloading System}
\begin{figure*}[t!]
\centering
\includegraphics[width=0.5\textwidth]{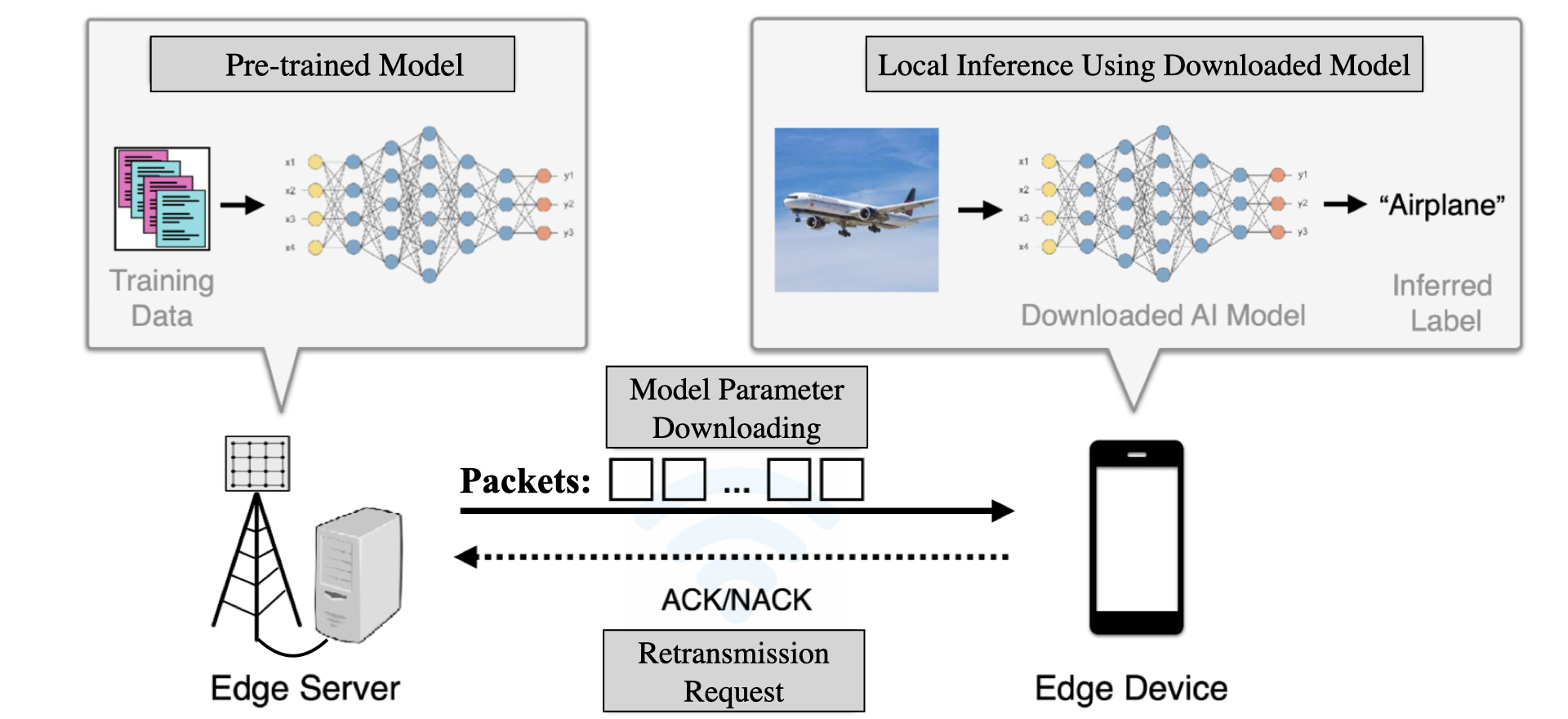}\vspace{-1mm}
\caption{The AI-model downloading system with retransmission.}
\label{AI_downloading_system}\vspace{-6mm}
\end{figure*}

\subsubsection{Packetization}
At the edge server side, the AI model's parameters are partitioned and encapsulated into multiple packets. 
We utilize modulation and coding scheme (MCS) tables to select suitable combinations of modulation order and code rate for downlink transmission. 
According to practical wireless communication standards such as LTE and 5G NR,
the transmitter and receiver typically agree on an MCS index, which corresponds to a particular modulation order (e.g., QPSK, 16-QAM, 64-QAM, 256-QAM, etc.) and associated code rate (e.g., 1/2, 3/4, 2/3, 5/6, etc.). 
Following the same argument for conventional ARQ/HARQ protocols, we assume the transmitter has no instantaneous channel state information (CSI) when making (re)transmission decisions. 
Consequently, the modulation order and code rate are chosen based on long-term (historical) channel statistics rather than per-packet CSI. 
The information-bearing payload per packet is determined by the modulation order $M$, which provides $\log_2 M$ bits per symbol, and the code rate $\alpha\in(0,1]$, which is the fraction of information bits after channel coding. 
If each packet carries $W$ modulation symbols and each parameter is encoded using $n$ bits, the maximum number of parameters that can be conveyed in one packet is
\begin{align}
    K_\text{capacity} = \left\lfloor \frac{\alpha W\log_2 M}{n} \right\rfloor,
\end{align}
where $\lfloor\cdot\rfloor$ denotes the integer floor function.
\subsubsection{Parameter Transmission}
Consider the downlink transmission of a neural network model whose parameters are concatenated into a $D$-dimensional vector $\mathbf{w}\in\mathbb{R}^{D}$. 
Let $w_d$ denote the $d$-th (real-valued) parameter, which is quantized and encoded as an $n$-bit signed integer \cite{jacob2018quantization}. The represented decimal integer value is expressed as
\begin{equation}
    w_d=\sum_{i=0}^{n-2}b_i2^i-b_{n-1}2^{n-1},
\end{equation}
where $b_i\in\{0,1\}$ is the binary value at the $i$-th position. 
Each bit is transmitted over a binary symmetric channel with error probability $P_b$. Let $a_i\in\{0,1\}$ be independent flip indicators. The received bit is $\hat{b}_i=b_i\oplus a_i$, and the resulting distorted integer is~\cite{zeng2024knowledge}:
% \begin{equation}
%     \hat{w}_d=\sum_{i=0}^{n-2}(b_i\oplus a_i)2^i-(b_{n-1}\oplus a_{n-1})2^{n-1}.
% \end{equation}
% Equivalently, in algebraic form, 
\begin{equation}\label{eqn: received feature}
    \hat{w}_d=\sum_{i=0}^{n-2}[a_i(-1)^{b_i}+b_i]2^i-[a_{n-1}(-1)^{b_{n-1}}+b_{n-1}]2^{n-1},
\end{equation}
where the binary indicators $\{a_i\}$ i.i.d. follow $\text{Bernoulli}(P_b)$, independent across bit positions and across parameters.

\vspace{-3mm}
\subsection{Performance Metrics}

\subsubsection{Downloading Loss}
In neural networks, the inference accuracy is commonly assessed via the validation loss $f(\mathbf{w})$, defined as a function of the model’s parameter vector $\mathbf{w}\in\mathbb{R}^D$ that collects all trainable weights. 
Typically, a lower validation loss corresponds to higher inference accuracy, indicating an inverse relationship between these metrics \cite{yan2022swim}. 
We refer to a model as well trained if its parameters achieve (or closely approximate) the minimum validation loss on the validation set.
Given the close relationship between accuracy and loss, we quantify each parameter’s importance by its impact on the loss. Let $\mathbf{w}$ denote the pre-trained optimal parameters. Perturbation (e.g., bit errors) of $\mathbf{w}$ induces an increase in the loss function, which we refer to as the \emph{loss variation}:
\begin{equation}\label{lossvariation}
    \Delta f(\mathbf{w}) = f(\hat{\mathbf{w}}) - f({{\mathbf{w}}}).
\end{equation}
We denote the error in the parameter vector $\mathbf{w}$ as the random vector $\Delta {\bf w}=\hat{\bf w}-\mathbf{w}$. Using a second-order Taylor expansion around $\mathbf{w}$, the loss variation is written as
\begin{align}\label{taylor}
    \Delta f(\mathbf{w}) \approx \nabla f(\mathbf{w})^T\!\Delta{\mathbf{w}} + \frac{1}{2}\Delta \mathbf{w}^T \mathcal{H}({\mathbf{w}})\Delta \mathbf{w} + o(\Delta {\mathbf{w}}^3),
\end{align}
where $\mathcal{H}({\mathbf{w}})=\nabla^2f(\mathbf{w})$ denotes the Hessian matrix of ${\mathbf{w}}$. 
To quantify how these perturbations affect model performance, we consider the expected  loss variation, $\mathbb{E}[\Delta f(\mathbf{w})]$, induced by distortions in model parameters\cite{yan2024hessian}:
\begin{align}\label{hu's paper}
    \mathbb{E}\left[\Delta f(\mathbf{w})\right] \nonumber&\overset{(a)}{\approx} \frac{1}{2}\mathbb{E}[\Delta \mathbf{w}^T \mathcal{H}\!\left({\mathbf{w}}\right)\Delta \mathbf{w}] \nonumber\\&\overset{(b)}{\approx} \frac{1}{2}\sum_{d=1}^{D}\frac{\partial^2 f}{\partial w_d^2}\cdot \mathbb{E}\left[\left(\Delta{w}_d\right)^2\right],
\end{align}
where the approximation (a) comes from the Taylor expansion together with the fact: $\mathbb{E}[\Delta {w}_d]=0, \forall d\in\{1,\cdots,D\}$; the approximation (b) arises from the fact that the off-diagonal elements of the Hessian matrix contribute negligibly to the expected loss variation~\cite{yan2024hessian}. We define each parameter's second-order derivative, $\frac{\partial^2 f}{\partial w_d^2}$  (i.e., the diagonal Hessian element $\mathcal{H}_{d,d}$), as \textit{parametric sensitivity}: it quantifies the expected contribution of perturbations in $w_d$ to the overall loss. Larger values indicate that distortions in $w_d$ result in proportionally greater degradation in model performance, making such parameters more critical to preserve during noise injection\footnote{Since the model is well trained, the Hessian matrix is positive semi-definite at a local minimum as: $\nabla^2 f(\mathbf{w}) \succcurlyeq 0.$
Thus, each second derivative of the loss function with respect to parameter $w_d$ is non-negative: $\frac{\partial^2 f}{\partial w_d^2} \geq 0, \forall d.$}.
Consider an arbitrary packet $\mathcal{U}_j$ in the AI downloading system; we define its \emph{packet-level sensitivity} as the cumulative sensitivity of all parameters contained within the packet:
\begin{equation}
    s_j \triangleq \sum_{d \in \mathcal{U}_j} \frac{\partial^2 f}{\partial w_d^2}.\label{Eqn: 7}
\end{equation}
In the AI downloading system, we define the \emph{downloading loss} as the expected loss variation $\mathbb{E}[\Delta f(\mathbf{w})]$ resulting from channel-induced noise that perturbs the model parameters during transmission. The detailed analysis of this metric is presented in the Section III-A.

\subsubsection{Downloading Latency}
The downloading latency is quantified as the total number of transmissions of total packets.
% Given a total of $J$ packets, each with a fixed transmission size of $W$ bits, modulation order $M$, the total number of transmission symbols required to transmit the AI-model once can be expressed as:
% \begin{align}
%     N_\text{symbol} = \left\lceil \frac{JW}{\log_2 M} \right\rceil,
% \end{align}
% where $\sum_{i=1}^{J} k_i = D$.
% The total number of retransmission slots is obtained by tracking each packet independently. 
Consider total $J$ packets for transmission. 
If the $j$-th packet's received BER exceeds its prescribed reliability threshold, the packet is retransmitted and its per-packet counter $T_j$ is incremented. 
This is repeated until the packet meets its reliability requirement. 
Summing the counters across all packets gives the total number of transmission slots:
\begin{equation}
    T_{\rm total}=\sum_{j=1}^{J}T_j,
\end{equation}
which reflects the communication latency required to achieve the target model downloading reliability.
\vspace{-3mm}
\section{Sensitivity-Aware Downloading Loss}
In this section, we develop a sensitivity-aware downloading loss metric that quantifies the contribution of individual parameters to downloaded model's loss variation in the presence of channel distortions. Building on this metric, we characterize the heterogeneous importance of model parameters by analyzing their sensitivity distributions in representative neural network architectures. Moreover, we reveal the inherently right-skewed sensitivity distributions of neural network parameters and demonstrate how sensitivity skewness leads to highly unequal impacts of parameters on model performance. These results lay a theoretical foundation for sensitivity-aware retransmission design in the following sections.
\begin{figure}[t!]
\centering
\subfigure[LeNet-5]{
\includegraphics[width=0.6\columnwidth]{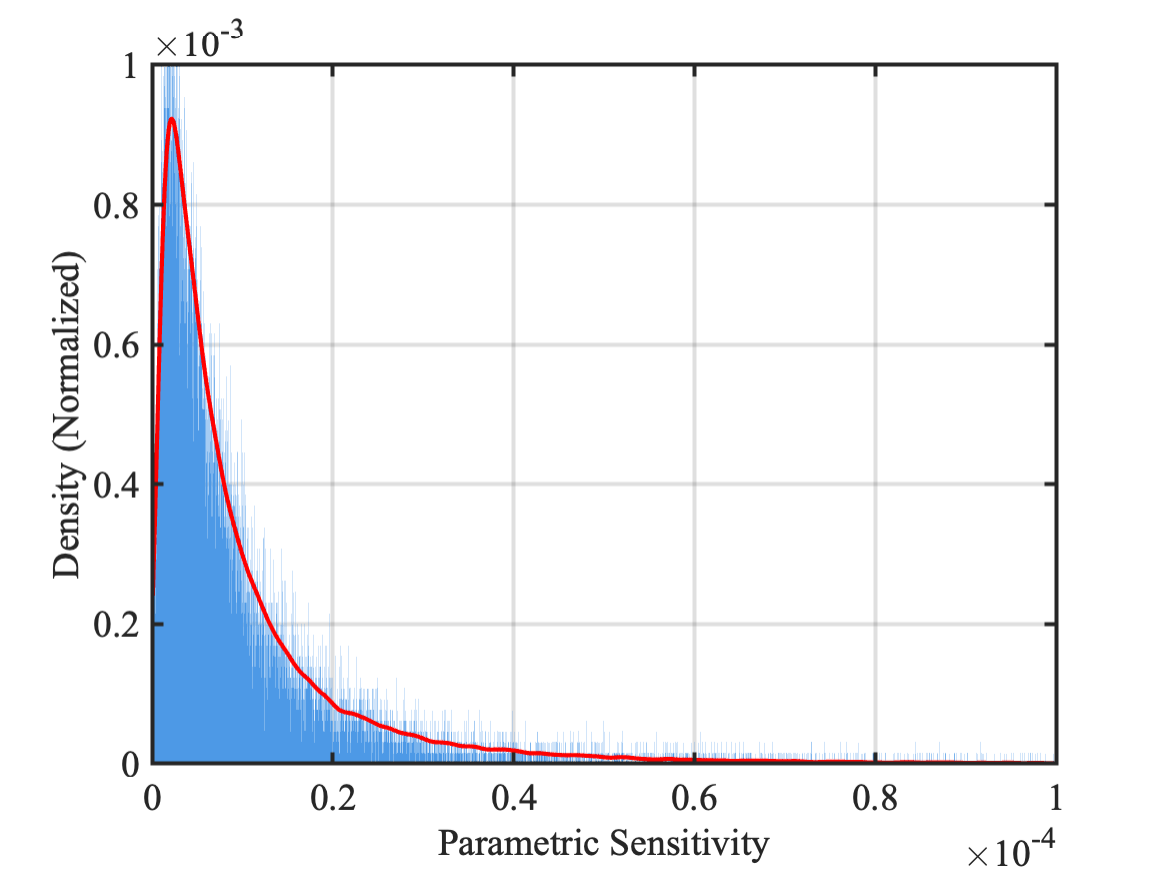}\label{skewness_LeNet}}\vspace{-1mm}
\subfigure[ShuffleNetV2]{
\includegraphics[width=0.6\columnwidth]{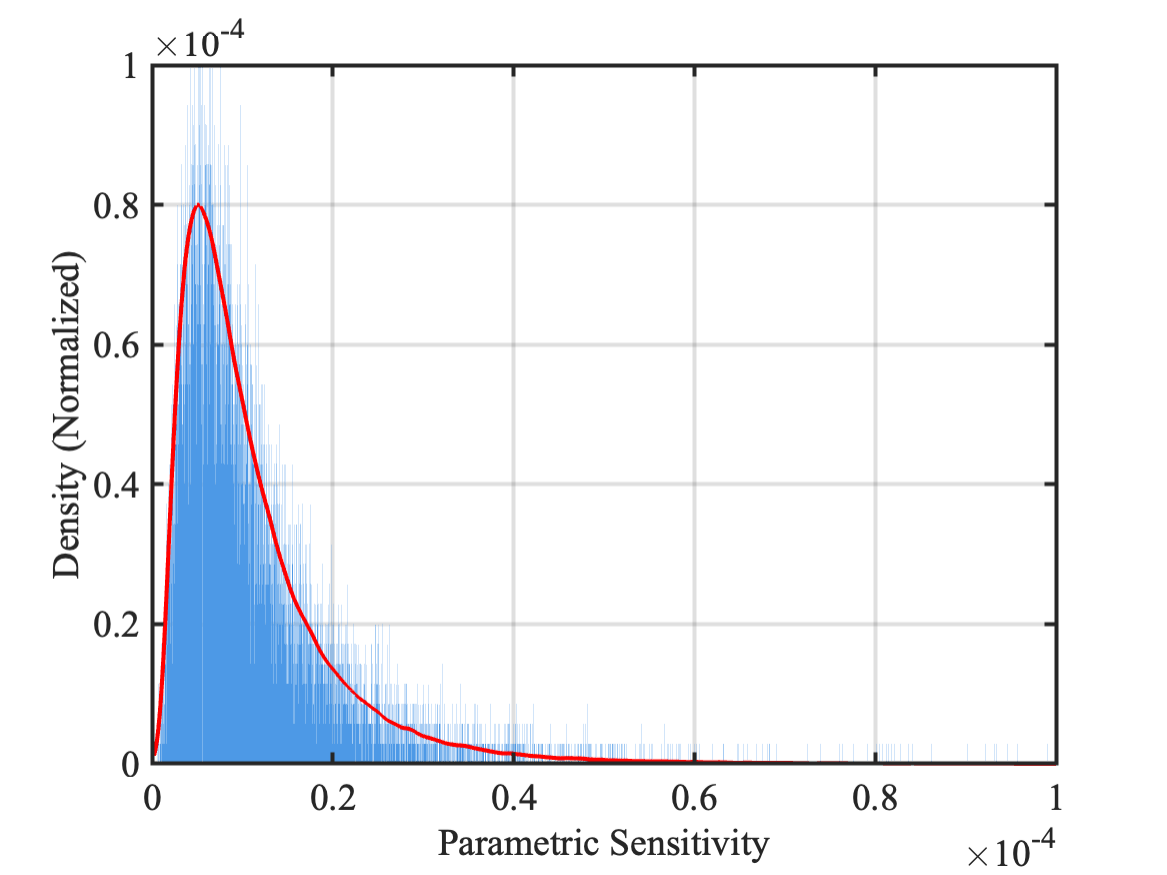}\label{skewness_ShuffleNet}}\vspace{-1mm}
\caption{Distribution of the parametric sensitivity in two DNN models. Each parametric sensitivity is assigned to a histogram bin, and the bin counts are normalized to form a probability density function. To obtain a smooth approximation of the distribution, every 20 consecutive bins are grouped, and the average bin center and corresponding average density are computed for each group. The red curve connects these averaged points to represent the underlying sensitivity distribution.}
\label{skewness_ditribution}\vspace{-6mm}
\end{figure}
\vspace{-5mm}
\subsection{Downloading Loss Analysis}
Based on \eqref{hu's paper}, we introduce below the \emph{sensitivity-aware downloading loss} to quantify the expected loss variation resulting from bit-error-induced parametric distortion.

\begin{Lemma}[Sensitivity-Aware Downloading Loss]\label{Lemma0}\rm
Consider a model with $J$ packets, where the parameters in each packet are encoded as $n$-bit signed integers and transmitted over a wireless channel. Each packet experiences a potentially different BER, denoted by $P_{b,j}$. 
Accordingly, the expected sensitivity-aware downloading loss of the model is given by\vspace{-2mm}
\begin{align} \label{importancescore}
\mathbb{E}\left[\Delta f(\mathbf{w})\right] \approx \alpha\cdot\sum_{j=1}^{J}s_j\cdot P_{b,j},
\end{align}
where $\alpha=\frac{4^n-1}{6}$ denotes the constant term under a fixed $n$-bit quantization.
\end{Lemma}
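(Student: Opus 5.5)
The plan is to start from the diagonal-Hessian approximation \eqref{hu's paper}. There the expected loss variation is already reduced to $\frac{1}{2}\sum_{d}\frac{\partial^2 f}{\partial w_d^2}\,\mathbb{E}[(\Delta w_d)^2]$, so the remaining work is to compute the second moment of the bit-error-induced distortion of a single $n$-bit parameter and then regroup the sum by packets. From the received-integer model \eqref{eqn: received feature}, the error of parameter $w_d$ in packet $\mathcal{U}_j$ is
\begin{equation*}
\Delta w_d=\hat w_d-w_d=\sum_{i=0}^{n-2}a_i(-1)^{b_i}2^i-a_{n-1}(-1)^{b_{n-1}}2^{n-1},
\end{equation*}
where the $a_i$ are i.i.d.\ $\text{Bernoulli}(P_{b,j})$ and $P_{b,j}$ is the BER of the packet carrying $w_d$. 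It is convenient to write this as $\Delta w_d=\sum_{i=0}^{n-1}c_i a_i 2^i$ with signs $c_i\in\{\pm1\}$.

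Squaring gives diagonal and cross terms. Since $a_i^2=a_i$, the diagonal part has expectation $\sum_{i=0}^{n-1}4^i\,\mathbb{E}[a_i]=P_{b,j}\sum_{i=0}^{n-1}4^i=P_{b,j}\frac{4^n-1}{3}$. The cross terms are $\sum_{i\neq k}c_ic_k2^{i+k}\mathbb{E}[a_ia_k]=P_{b,j}^2\sum_{i\neq k}c_ic_k2^{i+k}$. I would dispose of these in one of two ways, stating whichever matches the paper's standing assumptions. The first is to treat the bits $b_i$ as equiprobable and independent, so that the signs $c_i$ are zero-mean and the cross terms vanish in expectation; this is also what justifies $\mathbb{E}[\Delta w_d]=0$, which step (a) of \eqref{hu's paper} already uses. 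The second is to note that they are $O(P_{b,j}^2)$, which is negligible in the low-BER operating regime targeted by retransmission. Either way, $\mathbb{E}[(\Delta w_d)^2]\approx \frac{4^n-1}{3}P_{b,j}$, which is consistent with the ``$\approx$'' in the statement.

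Substituting into \eqref{hu's paper} gives
\begin{equation*}
\mathbb{E}[\Delta f(\mathbf w)]\approx\frac{4^n-1}{6}\sum_{d}\frac{\partial^2 f}{\partial w_d^2}\,P_{b,j(d)},
\end{equation*}
where $j(d)$ is the packet containing $w_d$. The final step is to split the sum over $d$ into the disjoint packets $\mathcal{U}_1,\dots,\mathcal{U}_J$; within each packet the BER is the common value $P_{b,j}$. Factoring it out and using the definition \eqref{Eqn: 7} of $s_j$ yields $\alpha\sum_j s_jP_{b,j}$ with $\alpha=\frac{4^n-1}{6}$.

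The main obstacle is handling the cross terms, together with the implicit zero-mean assumption that \eqref{hu's paper} inherits. I would first try the uniform-random-bit argument, since it makes both the cross terms and $\mathbb{E}[\Delta w_d]$ vanish exactly. I would mention the small-$P_b$ argument as a fallback for deterministic bit patterns.
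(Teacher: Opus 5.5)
Your proposal is correct and follows the same route as the paper: compute the second moment of the bit-flip error $\Delta w_d$, sum the geometric series $\sum_{i=0}^{n-1}4^i=\frac{4^n-1}{3}$, substitute into the diagonal-Hessian approximation, and regroup by packets to obtain $\alpha\sum_j s_jP_{b,j}$. The only difference is that you expand $\mathbb{E}[(\Delta w_d)^2]$ directly, whereas the paper computes the fixed-bit variance $P_b(1-P_b)\frac{4^n-1}{3}$, approximates $1-P_b\approx 1$, and asserts $\mathbb{E}[\Delta w_d]=0$ (for a fixed bit pattern this mean is actually $P_b\sum_i c_i2^i$); your explicit handling of the cross terms, which vanish exactly in expectation under uniform random bits and are $O(P_{b,j}^2)$ otherwise, is the more careful version of the same step.
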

\begin{proof}
    (See Appendix~\ref{Lemma0proof}).
\end{proof}
% \noindent The proof is given in Appendix \ref{Lemma0proof}.

One can observe that the downloading loss is governed by two main factors:
\begin{itemize}
    \item \textit{Parametric sensitivity:} 
    The packet-level parametric sensitivity $s_j$, defined as the sum of the parametric sensitivities with respect to the parameters contained in packet $\mathcal{U}_j$, scales the impact of a parameter distortion. For a fixed noise level, packets with higher cumulative sensitivity contribute more significantly to model performance degradation.
    
    \item \textit{Channel-induced distortion:}  
    The term $P_{b,j}$ captures the channel-induced distortions that affect the parameters contained in packet $\mathcal{U}_j$. 
    As $P_{b,j}$ increases due to poorer channel conditions, the induced distortion grows, leading to a larger contribution of $\mathcal{U}_j$ to the downloading loss.
\end{itemize}
Within a given packet, the channel-induced distortion affects all parameters in one packet uniformly, whereas parametric sensitivity varies across parameters.
Since this sensitivity governs each parameter’s contribution to model's downloading loss, characterizing its distribution across parameters is essential for understanding the heterogeneous impact on model performance.
\begin{figure}[t!]
\centering
\subfigure[Average downloading loss.]{
\includegraphics[width=0.6\columnwidth]{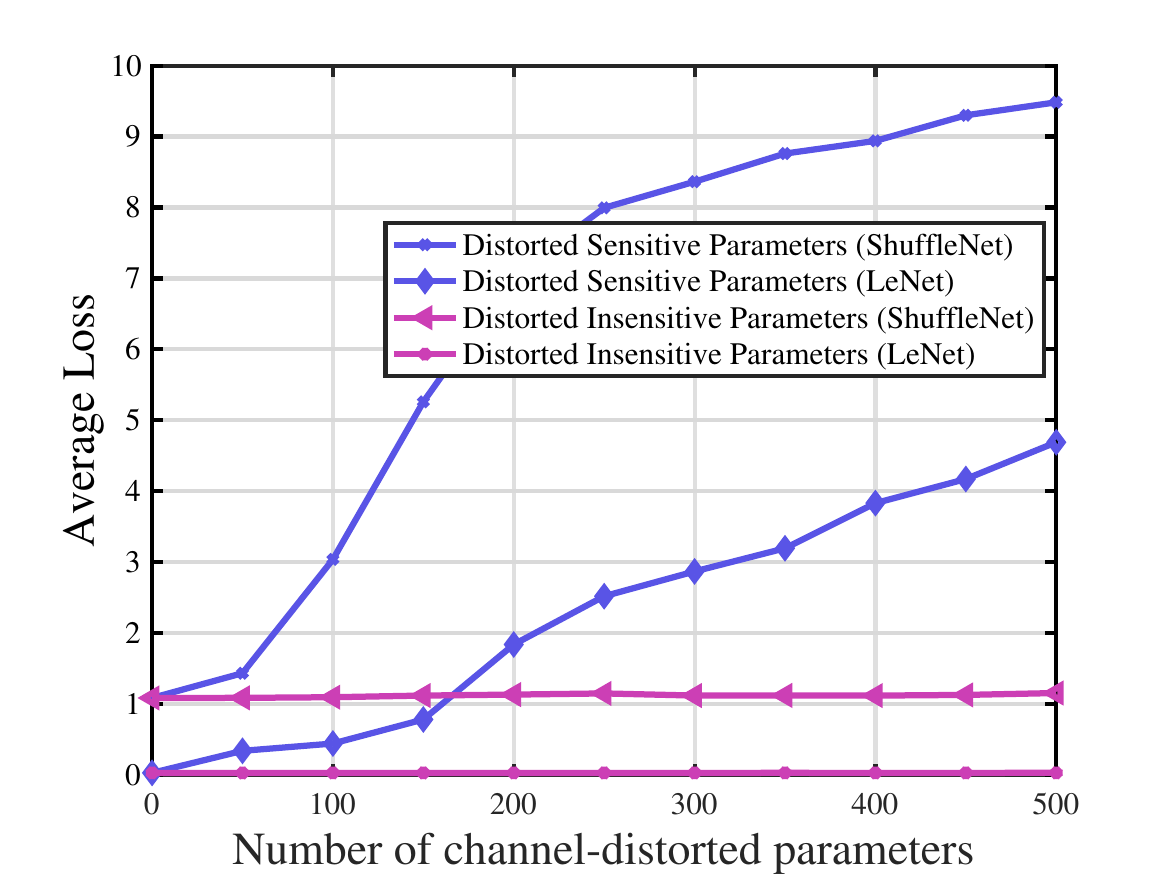}\label{LeNet_error}}\vspace{-1mm}
\subfigure[Average classification accuracy.]{
\includegraphics[width=0.6\columnwidth]{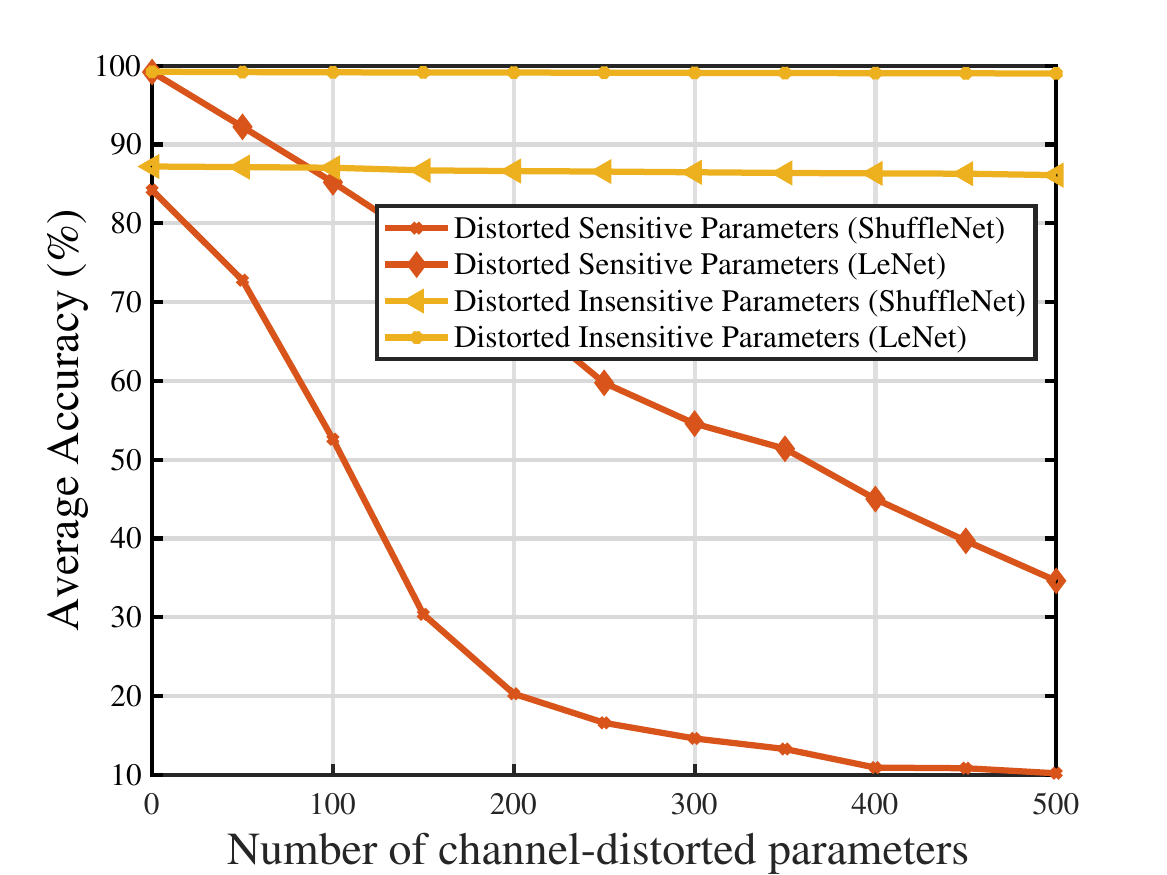}\label{Shufflenet_error}}\vspace{-1mm}
\caption{The effect of injecting BER = 0.1 into high- versus low-sensitivity parameter subsets (top-500 vs. bottom-500) on both model average loss and inference accuracy.}
\label{errortolerance}\vspace{-6mm}
\end{figure}\vspace{-3mm}
\subsection{Sensitivity Skewness and Downloading Loss}\label{Sec: skewness}
Unlike generic data that is typically treated as uniformly important, deep neural networks exhibit substantial variability in the importance of individual parameters. 
To quantitatively demonstrate such inherent within-architecture heterogeneity, the parametric-sensitivity distributions for LeNet-5 and ShuffleNetV2 are shown in Fig.~\ref{skewness_ditribution}. We formally quantify their statistical characteristics by the skewness measure as follows.
\begin{Lemma}[Skewness Measure \cite{skewnessmedianmean}]\rm
    The skewness of the probability distribution of a random variable $X$ can be characterized by the Fisher's moment coefficient, given by
\begin{align}
     \gamma_X
 = \frac{\mathbb{E}[(X - \mathbb{E}[X])^3]}
        {(\operatorname{Var}[X])^{3/2}}.
\end{align}
\end{Lemma}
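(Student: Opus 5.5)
The statement of Lemma~2 is essentially a definition borrowed from~\cite{skewnessmedianmean}. What I would prove is that the Fisher coefficient $\gamma_X$ is a legitimate measure of skewness, meaning it satisfies the properties that make it usable for the sensitivity distributions in Fig.~\ref{skewness_ditribution}. The argument has four parts.

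\textbf{Well-definedness.} I would first assume that $X$ has a finite third moment and that $\operatorname{Var}[X]>0$. The sample sensitivities $\partial^2 f/\partial w_d^2$ take finitely many values, and by the footnote after~\eqref{hu's paper} they are nonnegative. So the empirical distribution trivially satisfies these conditions unless all sensitivities are equal, which is the degenerate uniform case.

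\textbf{Affine invariance.} Next I would show that $\gamma_X$ is unchanged under positive affine maps. Set $Y=aX+c$ with $a>0$. Then $Y-\mathbb{E}[Y]=a(X-\mathbb{E}[X])$, so the numerator scales by $a^3$ and $(\operatorname{Var}[Y])^{3/2}$ also scales by $a^3$. Hence $\gamma_Y=\gamma_X$. This matters here because it makes the measure independent of the unit of the loss. In particular, rescaling all sensitivities by the constant $\alpha$ of Lemma~\ref{Lemma0} does not change the skewness.

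\textbf{Symmetric distributions give zero.} I would then show that $\gamma_X=0$ when $X$ is symmetric about its mean. If $X-\mu\overset{d}{=}-(X-\mu)$ with $\mu=\mathbb{E}[X]$, then $\mathbb{E}[(X-\mu)^3]=-\mathbb{E}[(X-\mu)^3]$, so the numerator vanishes.

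\textbf{Sign identifies the heavy tail.} Write $Z=(X-\mu)/\sigma$ with $\sigma^2=\operatorname{Var}[X]$, so that $\gamma_X=\mathbb{E}[Z^3]$. Split this as
\begin{align}
\gamma_X=\mathbb{E}[Z^3\mathbf{1}\{Z>0\}]-\mathbb{E}[|Z|^3\mathbf{1}\{Z<0\}].
\end{align}
Hence $\gamma_X>0$ exactly when the cubed deviations above the mean outweigh those below it, which is a long right tail. This is the right-skewed case relevant to the sensitivities.

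\textbf{Main obstacle.} The step I expect to be delicate is linking $\gamma_X>0$ to the intuitive ``mean exceeds median'' notion of right skewness. That implication does not hold for every distribution, so I would not claim it. Instead I would follow~\cite{skewnessmedianmean} and state the interpretation through the cubic-deviation decomposition above. I would add, as a remark, that for the unimodal, nonnegative, long-tailed sensitivity histograms observed empirically, both criteria agree.
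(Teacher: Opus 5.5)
The paper gives no proof of this lemma. It is the standard Fisher--Pearson moment coefficient, taken as a definition from \cite{skewnessmedianmean} and used only to produce the values $2.36$ and $1.89$ in the subsequent Remark, which are then compared with the rule of thumb $|\gamma_X|\ge 1$. You are therefore right that there is nothing to prove beyond well-definedness, and each property you add is correct:
\begin{itemize}
\item the moment conditions and $\operatorname{Var}[X]>0$, which fail only in the all-equal case, where $\gamma_X$ is $0/0$;
\item invariance under $X\mapsto aX+c$ with $a>0$ (the sign flips for $a<0$);
\item vanishing under symmetry;
\item the split of $\mathbb{E}[Z^3]$ into right and left cubed deviations.
\end{itemize}
The split is a restatement of the definition rather than a theorem, and reading it as a ``long right tail'' is an interpretation. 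What your route adds to the paper's bare citation is a justification of how the number is used. Scale invariance makes the LeNet-5 and ShuffleNetV2 values comparable despite different loss scales, and unaffected by the factor $\alpha$ of Lemma~\ref{Lemma0}. The sign decomposition supports reading $\gamma_X>0$ as right-skewed.

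The one thing to correct is your closing remark. Unimodality, nonnegativity and a long right tail do \emph{not} make $\gamma_X>0$ agree with ``mean $>$ median.'' Take $\varepsilon=10^{-3}$, $L=60$, and let $X$ have density $2(1-\varepsilon)x$ on $[0,1]$ and $\varepsilon/(L-1)$ on $(1,L]$. This $X$ is unimodal with mode $1$, nonnegative and long-tailed, and $\gamma_X\approx 38$. Yet its mean $\tfrac{2}{3}(1-\varepsilon)+\varepsilon\tfrac{L+1}{2}\approx 0.6965$ lies below its median $1/\sqrt{2(1-\varepsilon)}\approx 0.7075$. So agreement between the two criteria can only be stated as an empirical check on the actual sensitivity data, by computing the sample mean and median. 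It cannot be derived from the shape of the histogram. Your caution is well placed: the paper's complexity argument in Section~V-B relies on exactly this ``mean typically exceeds median'' heuristic, citing the same reference.
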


\begin{Remark}[Skewness of Parametric Sensitivity]\rm
    The skewnesses of LeNet-5 and ShuffleNetV2 are calculated as 2.36 and 1.89, respectively. These values satisfy the $|\gamma_X|\ge1$ criterion for highly right-skewed behavior~\cite{highlyskewed}. This phenomenon indicates that most parameters have low sensitivities, while only a small subset exhibits high sensitivity and consequently dominates the model's downloading loss under channel distortions.
\end{Remark}

To demonstrate the inefficiency of uniform-reliability transmission, we compare the effect of bit errors on a model's most and least sensitive parameters.
For each model (ShuffleNetV2 and LeNet-5), we rank parameters by parametric sensitivity and form two groups: one with the 500 parameters exhibiting the highest sensitivities and the other with the 500 parameters exhibiting the lowest sensitivities. We then inject bit errors only into the selected group at a BER of 0.1, leaving all other parameters unchanged. As shown in Fig. \ref{errortolerance}, distorting the high-sensitivity group causes a significant drop in accuracy, whereas corrupting the low-sensitivity group has a negligible effect. This disparity highlights the inefficiency of uniform transmission policies and lays the foundation for sensitivity-based retransmission design in the sequel.

\vspace{-3mm}
\section{Overview of PASAR Protocol}
In this section, we design the PASAR protocol by adopting the solution of the online multiple-choice knapsack problem (MCKP). To this end, the retransmission control problem is formulated to minimize the total transmission latency under a constraint on the total sensitivity-aware downloading loss. By solving the problem as mentioned, the PASAR protocol is designed with the key feature of an adaptively adjusted termination threshold. The detailed threshold design is the topic of the next section.
\vspace{-3mm}
\subsection{Retransmission Control Problem}
First, the retransmission process is described as follows. In AI model downloading system, the model parameters are partitioned and encapsulated into $J$ packets, with the $j$-th packet containing a partial vector $\mathbf{w}_j$ (see Section II-A). Each packet $\mathcal U_j$ is assigned a packet-level sensitivity $s_j$ as defined in \eqref{Eqn: 7}.
Following retransmission mechanism, a packet is retransmitted iteratively until it satisfies the termination criterion.
Upon the $i$-th reception of packet $\mathcal U_j$, the device decodes it to obtain a partial parameter vector ${\mathbf w}_{j,i}$ and record an instantaneous BER $P_{b,j,i}$.
Given packet $\mathcal U_j$ terminates after $T_j$ transmissions, the device averages its $T_j$ decoded parameters to form $\widehat{\mathbf w}_j=\frac{1}{T_j}\sum_{i=1}^{T_j}{\mathbf w}_{j,i}$, which reduces the variance of channel-induced distortion by $1/T_j$. The inference task is then performed using the downloaded model assembled from $\{\widehat{\mathbf w}_j\}_{j=1}^J$.

Next, we quantify the total downloading loss over rounds.
For an arbitrary packet $\mathcal U_j$, let $T_j$ denote the number of transmissions required to meet its stopping criterion.
Due to channel fading, each transmission experiences a different BER at the receiver, denoted as $\{P_{b,j,i} : i = 1, \ldots, T_j\}$. 
For each parameter $d\in\mathcal{U}_j$, the model's downloading loss caused by this parameter after $(T_j-1)$ retransmissions is given by:
\begin{align}\label{parameterloss}
    \mathbb{E}\left[\Delta f(w_d)\right] \approx \alpha\cdot\frac{\partial^2 f}{\partial w_d^2} \cdot \overline{P}_{b,j,T_j},
\end{align}
where the receiver's average BER of the packet $\mathcal{U}_j$ after $(T_j-1)$ retransmissions is denoted as
\begin{equation}
    \overline{P}_{b,j,T_j}\triangleq\frac{1}{T_j}\sum_{i=1}^{T_j} P_{b,j,i}.
\end{equation}
By summing over the downloading loss caused by all parameters in the packet $\mathcal{U}_j$, the accumulated downloading loss for this packet after the $T_j$-th transmission is
\begin{align}\label{packetloss}
\beta_{j,T_j}&\approx \alpha\cdot \sum_{d \in \mathcal{U}_j} \frac{\partial^2 f}{\partial w_d^2} \cdot \overline{P}_{b,j,T_j}\nonumber\\&=\alpha\cdot s_j\cdot \overline{P}_{b,j,T_j}.
\end{align}
Consider a model whose parameters are partitioned into $J$ packets for transmission, where each packet $\mathcal{U}_j$ contributes the corresponding loss $\beta_{j,T_j}$. To ensure that the cumulative performance degradation remains within an acceptable range, the total downloading loss is subject to a global constraint $\beta_{\mathrm{total}}$ that characterizes the performance of the assembled complete model on the device's side.

Finally, we formulate the retransmission control problem with the objective of minimizing the total retransmission latency while satisfying the constraint on downloading loss. Mathematically,
\begin{equation*}
\textbf{(P1)}\qquad
\begin{aligned}
\min_{\{T_j\}} \quad& \sum_{j=1}^J T_j \\
\text{s.t.}\quad& \sum_{j=1}^J \beta_{j,T_j} \le \beta_{\mathrm{total}}.
\end{aligned}
\end{equation*}
\vspace{-8mm}
\subsection{MCKP Approach}
    Problem P1 exhibits the structural characteristics of an MCKP \cite{MCKP}. Specifically, for each packet $\mathcal{U}_j$, exactly one stopping round $T_j$ must be selected, where selecting round $T_j$ incurs an associated loss budget $\beta_{j,T_j}$. The traditional MCKP is an offline combinatorial optimization problem in which all item attributes are fixed and known prior to decision making. Even in this relatively simple setting, the MCKP is NP-hard, meaning that exact global optimization is computationally intractable when $J$ is large\cite{MCKP}. 
    Solving problem P1 to support online implementation is more challenging due to the causality constraint of channel information (i.e., information available only up to the current round).
    As a result, online decisions must be made based on historical information without knowledge of future channel realizations. The traditional round-based thresholding structure is commonly adopted in the literature to tackle online MCKP by decomposing the sequential decision-making process into discrete rounds with an adaptively adjusted termination threshold~\cite{MCKP2,MCKP3}. The resultant protocols prioritize packets with high value-to-cost ratios as the budget diminishes and are theoretically shown to achieve provable performance guarantees~\cite{MCKP2,MCKP3}. However, existing schemes typically control retransmissions solely based on channel conditions, treating all packets uniformly and ignoring the heterogeneous parametric sensitivities that govern their unequal contributions to the model performance. To enhance the system performance, we propose the MCKP-type PASAR protocol in the next sub-section with the distinctive feature of sensitivity-aware thresholding framework.
\begin{figure*}[t!]
\centering
\includegraphics[width=0.52\textwidth]{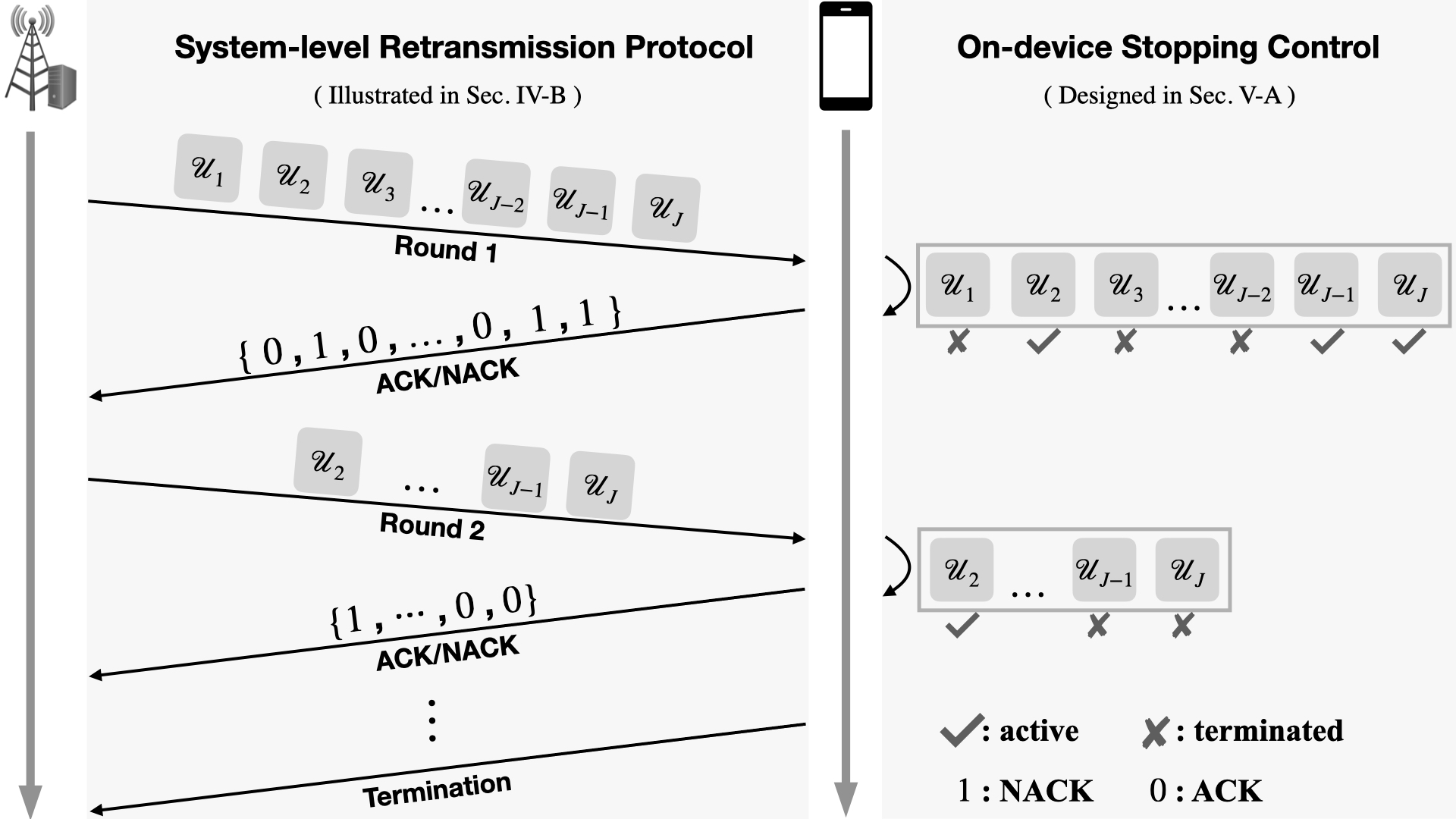}\vspace{-2mm}
\caption{Online retransmission control of the PASAR protocol.}
\label{PASAR_protocol_workflow}\vspace{-5mm}
\end{figure*}
\vspace{-3mm}
\subsection{PASAR Protocol}\label{operations} 
Motivated by the round-based thresholding framework for online MCKP, we tailor it to the model-downloading setting by incorporating packet-level parametric sensitivity, yielding the proposed PASAR protocol.
From an operational perspective, PASAR differs from conventional retransmission protocols in how retransmission decisions are executed during the downloading process.
Rather than relying on a fixed or globally uniform BER threshold, PASAR controls packet retransmission using packet-dependent BER criteria (see Step 3 below). Such criteria are determined by the sensitivity of the parameters carried by each packet.
As a result, packets containing less sensitive parameters can be terminated earlier even under relatively high BERs, while packets carrying highly sensitive parameters are retained for further retransmission to ensure sufficient reliability. In other words, the communication budget is better utilized by prioritizing reliable delivery of more critical parameters.

As shown in Fig.~\ref{PASAR_protocol_workflow}, the PASAR protocol adopts a two-level architecture consisting of a system-level retransmission framework and an on-device stopping control module.  
Its key steps are elaborated below.

\begin{itemize}
    \item Step 1 \emph{(Initialization):}
The protocol initiates by computing a look-up table of dimension $D$, among which the $d$-th entry corresponds to the parametric sensitivity, $\frac{\partial^2 f}{\partial w_d^2}$, for the $d$-th parameter. This table serves as the basis for evaluating the sensitivity of each packet in the subsequent phase. The parametric sensitivity table together with channel encoding and modulation operations provide the packet sensitivities $\{s_j\}$ and the initial packetization $\{\mathcal U_j\}_{j=1}^J$, which remain fixed throughout the protocol. Let $\beta_{\text{res},t}$ denote the remaining loss budget available in the transmission round $t$, and let $\mathcal V_{t}$ represent the set of active packets in this round. For round $t=1$, the initial residual budget is set to \( \beta_{\mathrm{res},1} = \beta_{\mathrm{total}} \), and all packets are active: \( \mathcal{V}_1 = \{\mathcal{U}_1, \ldots, \mathcal{U}_J\} \).

    \item Step 2 \emph{(Packet transmission):} 
In each round \( t \), the server transmits the current active packet set \( \mathcal{V}_t \) to the edge device for evaluation. Encoded packets are transmitted sequentially based on the selected MCS. The values $\{s_j\}_{j=1}^J$ are embedded as metadata in the header of each packet for reliable delivery to the device. After transmission, the device receives the symbols and demodulates them into a bit stream. The demodulated bit stream is then passed through the channel decoder and mapped back to the corresponding model parameters. In this case, the BER of packet $\mathcal{U}_j\in \mathcal{V}_t$ in its $i$-th transmission, denoted $P_{b,j,i}$, is treated as a known value obtained from the channel state information at the receiver (CSIR). 

    \item Step 3 \emph{(Retransmission control):}  
This is the module that represents the novelty of the proposed PASAR protocol. For each packet, the receiver updates the cumulative average BER after each retransmission and compares it with a packet-specific stopping threshold. Here, we define that a packet $\mathcal U_j$ is \emph{terminated} in round $t$ if the device terminates its retransmission process for $\mathcal U_j$ within this round, indicating that the packet has been successfully delivered. Packets that meet the stopping criterion are terminated and marked as ACK, while the remaining packets are labeled as NACK. The device then sends these ACK/NACK indicators to the server, which aggregates the NACK feedback to update the active set for the next round, i.e., $\mathcal{V}_{t+1} \leftarrow \{\mathcal{U}_j \in \mathcal{V}_t : \text{NACK received}\}$. The detailed design is presented in the next section.

 \item Step 4 \emph{(Protocol stopping criterion):} The PASAR protocol ends when one of the following conditions is met: 1) All packets are terminated within the allowed maximum number of transmissions, i.e., $\mathcal{V}_t = \emptyset$ for $\sum_{i=1}^t|\mathcal{V}_i| \le T_{\mathrm{max}}$, where $|\cdot|$ denotes the cardinality of a set. 2) The transmission limit is exceeded, i.e., $\sum_{i=1}^t|\mathcal{V}_i| > T_{\mathrm{max}}$ while $\mathcal{V}_t \ne \emptyset$.
\end{itemize}

The operational flow of the PASAR protocol is summarized in Algorithm~\ref{alg:SystemPASAR}. 

\begin{algorithm}[t]
\caption{PASAR Protocol}
\label{alg:SystemPASAR}
\begin{algorithmic}[1]
\STATE \textbf{Input:} Total budget $\beta_{\text{total}}$, max round $T_{\max}$, initial packet set $\mathcal{V}_{t=1} = \{\mathcal{U}_1, \ldots, \mathcal{U}_J\}$;
\STATE \textbf{Initialize:} $\beta_{\text{res},1} \leftarrow \beta_{\text{total}}$, $t \leftarrow 1$;
\WHILE{$\mathcal{V}_t \neq \emptyset$ \AND $\sum_{i=1}^t|\mathcal{V}_i| \le T_{\max}$}
    \STATE Server transmits $\mathcal{V}_t$ to device;
    \STATE Local device executes \textbf{On-Device Stopping Control (Algorithm~\ref{alg:OnDevice})}
    \STATE Device sends per-packet feedback (ACK/NACK) for all packets in $\mathcal{V}_t$;
    \STATE Server updates the active set for the next round as $\mathcal{V}_{t+1} \leftarrow \{\mathcal{U}_j \in \mathcal{V}_t : \text{NACK received}\}$;
    \STATE $t \leftarrow t + 1$;
    \ENDWHILE
\IF{$\mathcal{V}_t \neq \emptyset$}
    \RETURN Failure: Time/budget exceeded;
\ENDIF
\end{algorithmic}
\end{algorithm}

\vspace{-3mm}
\section{Sensitivity-aware Retransmission Control}
In this section, we design the key module (step) in the PASAR: retransmission control. At the core of the associated algorithm, the computation of dedicated BER thresholds incorporates the sensitivity metrics for received packets at the device side. We prove the effectiveness of the greedy method used to design the algorithm and then compare the design to conventional channel-aware ARQ/HARQ protocols. In addition, the efficiency of the threshold computation is shown via complexity analysis.

\vspace{-3mm}
\subsection{Design of Control Threshold}\label{Sensitivity Aware Retransmission Control}
In retransmission control module (Section~IV-B, Step~3), PASAR adopts a round-wise greedy principle: in each retransmission round, it makes non-anticipative decisions based only on the currently observed empirical BERs and packet-level sensitivities, and seeks to terminate as many packets as possible to shrink the active set early and reduce subsequent retransmissions. 
Direct implementation requires sorting $J$ packets globally by their instantaneous loss-budget consumption [costing $\mathcal{O}(J\log J)$], and then sequentially scanning the sorted list with per-packet threshold comparisons [costing an additional $\mathcal{O}(J)$]. This per-round complexity can be too high for real-time on-device execution. We instead develop a sensitivity-aware controller that implements the same greedy termination with a much lower complexity (see Section V-B). Specifically, it replaces global sorting with parallel, per-packet threshold checks calibrated by packet-level sensitivity, followed by a greedy budget-filling refinement.
Given the highly right-skewed empirical distribution of parametric sensitivities (see Section~\ref{Sec: skewness}), our design enables rapid elimination of the least sensitive packets while reserving retransmission resources for the small subset of highly sensitive ones.

The detailed retransmission-control algorithm is presented as follows. Upon receiving retransmissions in round $t$, the device evaluates whether each active packet $\mathcal U_j \in \mathcal V_t$ should terminate or remain in the active set, based on its packet-level sensitivity $s_j$ and empirical average BER $\overline{P}_{b,j,t}$. We introduce an index $\ell\in\{1,2,\cdots\}$ that counts the number of threshold-update epochs for each transmission round. Correspondingly, two quantities evolve with $\ell$: the residual loss budget $\beta_{\mathrm{res},t,\ell}$ and the active packet set $\mathcal V_{t,\ell}$. We design an adaptive sensitivity-aware termination threshold that is updated within each round based on the residual loss budget and the current active set. The algorithm comprises two phases:
\begin{itemize}
    \item Phase 1 \emph{(Threshold-based filtering):} The following two procedures are repeated until no remaining packet satisfies \eqref{termination_con}.
    \begin{itemize}
    \item[(i)] \emph{Termination threshold:} At update epoch $\ell$ in round $t$, we design a sensitivity-scaled termination threshold for each active packet $\mathcal U_j\in\mathcal V_{t,\ell}$ as
\begin{equation}\label{termination_con}
        \Gamma_{j,t,\ell} \triangleq  \frac{\beta_{\mathrm{res},t,\ell}}{\alpha\cdot s_j \cdot |\mathcal V_{t,\ell}|}.
    \end{equation}
    $\mathcal U_j$ is terminated if its average BER $\overline{P}_{b,j,t}= \frac{1}{t}\sum_{i=1}^{t}P_{b,j,i}$ satisfies $\overline{P}_{b,j,t}\le \Gamma_{j,t,\ell}$.
    \item[(ii)]\emph{Threshold updating:} After all packets satisfying the termination criterion at update iteration $\ell$, the residual loss budget and active set are updated by removing the terminated packets and deducting their corresponding loss contributions: 
\begin{equation}\label{update1}
        \beta_{\mathrm{res},t,\ell+1}
    =\beta_{\mathrm{res},t,\ell}
    -\!\!\!\!\!\sum_{\mathcal U_j\in \mathcal V_{t,\ell}:\overline{P}_{b,j,t}\le \Gamma_{j,t,\ell}}
    \!\!\!\!\!\!\!\!\!\alpha\cdot s_j \cdot\,\overline{P}_{b,j,t},
\end{equation}
\begin{equation}\label{update2}
        \mathcal V_{t,\ell+1}
    =\mathcal V_{t,\ell}\setminus 
    \left\{\mathcal U_j\in \mathcal V_{t,\ell}:\overline{P}_{b,j,t}\le \Gamma_{j,t,\ell}\right\}.
\end{equation}
The threshold-update index is incremented as $\ell \leftarrow \ell + 1$, and the corresponding threshold in \eqref{termination_con} is recalculated based on the updated active set $\mathcal V_{t,\ell}$ and loss budget $\beta_{\mathrm{res},t,\ell}$.
\end{itemize}
    \item Phase 2 \emph{(Greedy refinement):}
    Let $\mathcal V'_{t}$ and $\beta'_{\mathrm{res},t}$ denote the resulting active set and residual budget, respectively. The device then greedily exploits any remaining loss budget to terminate additional packets with the smallest loss contributions. Specifically, the device selects a subset $\mathcal S_t \subseteq \mathcal V'_t$ with the smallest $ \alpha s_j \overline{P}_{b,j,t}$ such that $\sum_{\mathcal U_j\in\mathcal S_t} \alpha s_j \overline{P}_{b,j,t} \le \beta'_{\mathrm{res},t}$.
    The selected packets are declared terminated and removed from the active set. 
\end{itemize}
At this point, the device locally stores the updated residual loss budget (i.e., $\beta_{\text{res},t+1}=\beta'_{\text{res},t}$) and reports per-packet ACK/NACK feedback to the server. The online implementation of the proposed on-device retransmission control mechanism is summarized in Algorithm~\ref{alg:OnDevice}. 

\begin{Lemma}[Greedy Property of the Threshold Design]\rm\label{Lemma3}
Fix an arbitrary round $t$ and let $\mathcal V_t$ denote the set of active packets. 
By selecting the termination set $\mathcal S_t \subseteq \mathcal V_t$, the proposed adaptive termination-threshold scheme maximizes the number of successfully delivered packets under the residual loss-budget constraint:
\[
|\mathcal S_t|=\max_{\mathcal S\subseteq \mathcal V_t:\ \sum_{\mathcal U_j\in\mathcal S} \alpha s_j \overline{P}_{b,j,t}\le \beta_{\mathrm{res},t}} |\mathcal S|.
\]
The resulting set is identical to that produced by the high-complexity sorting-based greedy rule, which selects packets in nondecreasing order of $\alpha s_j \overline{P}_{b,j,t}$ until the loss budget is exhausted.
\end{Lemma}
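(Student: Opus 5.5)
The plan is to reduce everything to the cost sequence $c_j \triangleq \alpha s_j \overline{P}_{b,j,t}$ for $\mathcal U_j\in\mathcal V_t$, sorted as $c_{(1)}\le c_{(2)}\le\cdots\le c_{(|\mathcal V_t|)}$ with ties broken by a fixed index order. In this notation the sorting-based greedy rule returns the prefix $\{(1),\dots,(k^\star)\}$, where $k^\star=\max\{k:\sum_{i\le k}c_{(i)}\le\beta_{\mathrm{res},t}\}$. The proof then has two parts. First, this prefix attains the maximum cardinality in the statement. Second, Phase~1 followed by Phase~2 outputs exactly this prefix.

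\emph{Optimality of the prefix.} I would use an exchange argument. Any feasible $\mathcal S$ with $|\mathcal S|=m$ has total cost at least $\sum_{i\le m}c_{(i)}$, because the $m$ smallest costs minimize the sum over all $m$-subsets. Feasibility of $\mathcal S$ therefore forces $\sum_{i\le m}c_{(i)}\le\beta_{\mathrm{res},t}$, so $m\le k^\star$. Hence the prefix of length $k^\star$ is a maximum-cardinality feasible set.

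\emph{Phase~1 removes only sorted-prefix elements.} At epoch $\ell$, the termination test $\overline{P}_{b,j,t}\le\Gamma_{j,t,\ell}$ in \eqref{termination_con} is equivalent to $c_j\le \beta_{\mathrm{res},t,\ell}/|\mathcal V_{t,\ell}|$. Since this is a threshold on $c_j$, the terminated set $\mathcal T_\ell$ is a prefix of the sorted order of $\mathcal V_{t,\ell}$.

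This prefix also fits in the budget. Write $\mathcal T_\ell=\{(1),\dots,(r)\}$ in the sorted order of $\mathcal V_{t,\ell}$, with $r=|\mathcal T_\ell|\le|\mathcal V_{t,\ell}|$. Each of these costs is at most $\beta_{\mathrm{res},t,\ell}/|\mathcal V_{t,\ell}|$, so
\[
\sum_{i\le r}c_{(i)}\le r\cdot\beta_{\mathrm{res},t,\ell}/|\mathcal V_{t,\ell}|\le\beta_{\mathrm{res},t,\ell}.
\]
Thus $\mathcal T_\ell$ is contained in the greedy prefix for $(\mathcal V_{t,\ell},\beta_{\mathrm{res},t,\ell})$, and the updated residual budget in \eqref{update1} stays nonnegative.

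\emph{Induction over epochs.} The key identity is a prefix-decomposition property. If $\mathcal T$ is an initial segment of the greedy prefix of $(\mathcal V,\beta)$, then
\[
\mathrm{Greedy}(\mathcal V,\beta)=\mathcal T\cup\mathrm{Greedy}\Bigl(\mathcal V\setminus\mathcal T,\ \beta-\textstyle\sum_{\mathcal T}c_j\Bigr).
\]
This holds because the sorted order of $\mathcal V\setminus\mathcal T$ is the tail of the sorted order of $\mathcal V$, and the partial sums shift by $\sum_{\mathcal T}c_j$. Applying the identity at each epoch via \eqref{update1}--\eqref{update2} shows that the union of all Phase~1 removals is an initial segment of $\mathrm{Greedy}(\mathcal V_t,\beta_{\mathrm{res},t})$. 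It also shows that the greedy set equals that union together with $\mathrm{Greedy}(\mathcal V'_t,\beta'_{\mathrm{res},t})$. Phase~1 terminates in finitely many epochs because $|\mathcal V_{t,\ell}|$ strictly decreases whenever any packet is removed.

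\emph{Phase~2 and the main obstacle.} Phase~2 selects, from $\mathcal V'_t$, the packets with the smallest $c_j$ whose total fits within $\beta'_{\mathrm{res},t}$. I would read this as "take the sorted prefix of $\mathcal V'_t$ while the budget allows," which is exactly $\mathrm{Greedy}(\mathcal V'_t,\beta'_{\mathrm{res},t})$. Combining this with the decomposition above, the overall output equals the sorting-based greedy set, and by the exchange argument it has maximum cardinality.

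The delicate step is this reading of Phase~2. If "smallest" were interpreted as skipping an item that does not fit and continuing to later, cheaper-fitting ones, the output could differ from the prefix. I would therefore pin down the prefix reading, and note that under any reading the output still attains the optimal cardinality, since $|\mathcal S|\le k^\star$ for every feasible $\mathcal S$.

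Ties would be handled by fixing the index-based tie-breaking convention once, in both the threshold rule and the sorting, so that the two procedures produce the identical set and not merely sets of equal size.
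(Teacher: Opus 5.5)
Your proposal is correct and follows the paper's proof essentially step for step. Like the paper, you use the exchange argument to show that the $k^\star$ cheapest packets form a maximum-cardinality feasible set. You then observe that each Phase~1 removal set $\{c_j\le \beta_{\mathrm{res},t,\ell}/|\mathcal V_{t,\ell}|\}$ is an affordable initial segment of the sorted order, so removing it yields the same reduced instance as continuing sorting-greedy. Phase~2 is then that continuation on $(\mathcal V'_t,\beta'_{\mathrm{res},t})$; your prefix-decomposition identity and fixed tie-breaking just make the paper's ``same reduced instance'' step explicit.

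Your stated worry about Phase~2 is moot. In nondecreasing order, once an item fails to fit, every later item also fails, so the skip and break readings produce the same prefix.
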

\begin{proof}
    (See Appendix~\ref{lemma3proof}).
\end{proof}

\begin{algorithm}[t]
\caption{On-Device Retransmission Control (Round $t$)}
\label{alg:OnDevice}
\begin{algorithmic}[1]
\STATE \textbf{Input:} $\mathcal V_t$, $\beta_{\mathrm{res},t}$, $\{s_j\}$, $\{P_{b,j,i}\}_{i=1}^t$;
\STATE \textbf{Initialize:} $\overline{P}_{b,j,t}\!\leftarrow\!\frac{1}{t}\sum_{i=1}^t P_{b,j,i}$ for all $\mathcal U_j\!\in\!\mathcal V_t$, $\ell\!\leftarrow\!1$, $\mathcal V_{t,1}\!\leftarrow\!\mathcal V_t$, $\beta_{\mathrm{res},t,1}\!\leftarrow\!\beta_{\mathrm{res},t}$;
\WHILE{$\mathcal V_{t,\ell}\neq\emptyset$}
    \STATE $\mathcal T_\ell \leftarrow \{\mathcal U_j\!\in\!\mathcal V_{t,\ell}: \overline{P}_{b,j,t} \le \frac{\beta_{\mathrm{res},t,\ell}}{\alpha s_j |\mathcal V_{t,\ell}|}\}$;
    \IF{$\mathcal T_\ell=\emptyset$} \STATE \textbf{break}; \ENDIF
    \STATE $\beta_{\mathrm{res},t,\ell+1} \leftarrow \beta_{\mathrm{res},t,\ell}-\sum_{\mathcal U_j\in\mathcal T_\ell}\alpha s_j \overline{P}_{b,j,t}$;
    \STATE $\mathcal V_{t,\ell+1} \leftarrow \mathcal V_{t,\ell}\setminus \mathcal T_\ell$, \quad $\ell\leftarrow \ell+1$;
\ENDWHILE
\STATE $\mathcal V'_t\leftarrow \mathcal V_{t,\ell}$, \ $\beta'_{\mathrm{res},t}\leftarrow \beta_{\mathrm{res},t,\ell}$, \ $c_j\leftarrow \alpha s_j \overline{P}_{b,j,t}$;
\STATE Sort $\mathcal V'_t$ by nondecreasing $c_j$; \quad $\mathcal S_t\leftarrow\emptyset$; \ $s\leftarrow 0$;
\FORALL{$\mathcal U_j$ in sorted $\mathcal V'_t$}
    \IF{$s+c_j \le \beta'_{\mathrm{res},t}$} \STATE $\mathcal S_t\leftarrow \mathcal S_t\cup\{\mathcal U_j\}$; \ $s\leftarrow s+c_j$;
    \ELSE \STATE \textbf{break}; \ENDIF
\ENDFOR
\STATE $\mathcal V'_t \leftarrow \mathcal V'_t\setminus \mathcal S_t$; \quad $\beta_{\mathrm{res},t+1}\leftarrow \beta'_{\mathrm{res},t}-\sum_{\mathcal U_j\in\mathcal S_t}c_j$;
\STATE \textbf{Output:} ACK for $\mathcal V_t\setminus \mathcal V'_t$, NACK for $\mathcal V'_t$.
\end{algorithmic}
\end{algorithm}

The proposed retransmission control algorithm has two key advantages.
\begin{enumerate}
    \item \textit{Parametric-sensitivity-based control:}
    The stopping condition \eqref{termination_con} is assigned based on the joint impact of packet-level parametric sensitivity $s_j$ and the instantaneous channel conditions, thereby enabling distinct reliability control across packets. Consequently, packets containing highly sensitive parameters (i.e., large $s_j$) are subject to stricter BER constraints, ensuring high reliability for these parameters that most affect model performance. In contrast, packets with lower cumulative sensitivity are allowed higher BER thresholds to avoid redundant retransmissions. This sensitivity-aware adaptation significantly reduces the total retransmission overhead while preserving downloaded model performance, thus enhancing the efficiency of model downloading.
    \item \textit{Packet elimination with reduced on-device complexity:} 
    In Phase 1, the device performs parallel, per-packet threshold comparisons that pre-eliminate a large fraction of packets without global ordering. 
    In Phase 2, the remaining budget is allocated to the reduced set by sequentially selecting packets with the smallest $\alpha s_j \overline{P}_{b,j,t}$, thereby matching the termination decisions of the sorting-based greedy algorithm. This design avoids global sorting in favor of parallel comparisons and reductions, yielding equivalent termination decisions.
\end{enumerate}

\begin{Remark}[Comparison with Channel-Aware Retransmission]\rm
We compare PASAR with conventional channel-aware retransmission, which uses a fixed BER threshold uniformly across packets\cite{sheng2017performance,chasecombining}:\vspace{-2.5mm}
    \begin{align}\label{traditionalARQ}
        \overline{P}_{b,j,t}\leq \frac{\beta_\text{total}}{\alpha\sum_{j=1}^{J}s_j}.
    \end{align}
The BER requirement in \eqref{traditionalARQ} results in a uniform reliability target for all packets. While simple to implement, this approach incurs unnecessary retransmissions for less critical parameters and potentially inadequate retransmissions for sensitive ones. In contrast, the proposed parametric-sensitivity-aware ARQ in \eqref{termination_con} introduces heterogeneous reliability, adapting the retransmission effort to the varying impact of different parameters on model performance. As a result, this approach enables more efficient use of radio resources, and thus accelerates the AI-model downloading process.
\end{Remark}
\vspace{-3mm}
\subsection{Complexity Analysis}
In Phase 1, packets are terminated by comparing their instantaneous BER observation against a dynamic threshold scaled by their own parametric sensitivity. Under a strongly positively skewed sensitivity distribution as shown in Fig.~\ref{skewness_ditribution}, the mean typically exceeds the median\cite{skewnessmedianmean}. As a result, in each intra-round update, more than half of the active packets satisfy the termination condition, producing a rapid geometric decay in the number of packets that require further retransmission. An upper bound is given by the worst case in which the number of active packets is halved at each iteration, yielding the sequence $J + \frac{J}{2} + \frac{J}{4} + \dots + 1$ and an overall linear complexity of $O(J)$. In Phase 2, the refinement of the residual set $\mathcal V'_t$ involves sorting the packets according to their costs, which incurs a time complexity of $O(|\mathcal V'_t|\log|\mathcal V'_t|)$. Consequently, the overall computational complexity of the PASAR is $O(J)+O(|\mathcal V'_t|\log|\mathcal V'_t|)$. Let $K$ denote the number of intra-round updates in Phase 1. 
In the worst case where the active set is halved at each update, the residual set satisfies $|\mathcal V'_t|\le J/2^K$, and thus the refinement cost is upper bounded by $\mathcal{O}\!\left(\frac{J}{2^K}\log\frac{J}{2^K}\right)$, which is a $2^K$-fold reduction relative to global sorting $\mathcal{O}(J\log J)$.

To have a concrete understanding, consider conventional local downloading models with $N=10^6\!\sim\!10^7$ parameters \cite{sandler2018mobilenetv2,ma2018shufflenet}, quantized to 8 bits and segmented into 1000-bit packets, we have $J\approx10^3\!\sim\!10^4$. Under the derived complexity bound, the PASAR stopping module incurs at most $10^4 \sim 10^5$ floating-point operations (FLOPs). Due to the strongly positively skewed sensitivity distributions observed in practice, the active packet set shrinks much more aggressively. The empirical measurements show that the actual FLOPs per round remain below $10^3$. 
On modern mobile or edge devices with clock frequencies in the GHz range\cite{GHz}, executing $10^5$ (worst case) or $10^3$ (in practice) operations incurs a latency on the order of $10^{-4}$ (worst case) or $10^{-6}$ (in practice) seconds, rendering the computational overhead of PASAR negligible.

\vspace{-3mm}
\section{Extension: PASAR with Parameter Pruning}
The highly skewed distribution of parametric sensitivity (see Section III-B) motivates parameter pruning to reduce communication overhead for more efficient AI downloading. In this section, we discuss the extension of the PASAR protocol to incorporate the parameter pruning. This reveals the underpinning trade-off between sensitivity heterogeneity, parameter pruning, and inference accuracy. 

Pruning works by removing parameters with low sensitivity, which, as revealed in our analysis, have a minimal impact on inference accuracy. 
Thus, moderate pruning benefits the efficiency of AI model downloading systems as it reduces the total number of parameters for transmission and thus retransmission latency while preserving the inference performance of the downloaded model.
On the other hand, overly aggressive pruning may discard parameters with non-negligible sensitivities, degrading inference accuracy and potentially violating the target performance requirement. \emph{These effects give raise to a latency–accuracy tradeoff}. From the perspective of PASAR, pruning reduces the skewness in the parametric sensitivity distribution and hence the performance gains from PASAR. 
Nevertheless, even with pruning, PASAR continues to outperform conventional channel-aware ARQ schemes with uniform stopping rules, provided that the remaining parameters retain a positive skewness in their sensitivities. The claim is substantiated by experimental results in Section VII-C.
\vspace{-2mm}
\section{Experimental Results}\label{ExperimentResult}
\subsection{Experimental Settings}
The default experimental settings are as follows unless specified otherwise.
\begin{itemize}
\item \textit{DNN model and datasets:}
In this study, the LeNet-5 \cite{Lenet5} and ShuffleNetV2 \cite{ma2018shufflenet} models are employed for model downloading tasks. The LeNet-5 model is trained on the well-established MNIST \cite{mnist} dataset, which contains handwritten digit images. The dataset is split into training and test sets, comprising 60,000 and 10,000 samples, respectively. On the other hand, ShuffleNetV2 is trained on the CIFAR-10 dataset \cite{cifar10}, with its training and test sets consisting of 50,000 and 10,000 samples, respectively. For both models, training is conducted over 100 epochs, and the optimal model is selected based on the highest test accuracy achieved.

\item \textit{Communication settings:}
During model downloading, the downlink is modeled as a Rayleigh fading channel. For each packet transmission, the channel coefficient follows i.i.d. $\mathcal{CN}(0,1)$. The model parameters are divided into different packets based on their order. For the purpose of comparison in the experiments, we fix the packet payload to 500 or 1000 information bits. The system operates with a bandwidth of 20 MHz. 
For channel coding, we select low-density parity-check (LDPC) codes\cite{LDPC}, and the modulation order and code rate for the benchmarks are chosen according to the MCS table under a specific SNR condition. Each model parameter is quantized with $n=8$ bits. The maximum number of transmissions is set to $T_\text{max}=25000$. The downloading latency is measured by averaging the total model downloading time across 5,000 independent simulation runs.

\item \textit{Stopping-control and budget settings:}
PASAR is implemented exactly according to Algorithm~\ref{alg:SystemPASAR} and Algorithm~\ref{alg:OnDevice}. 
For each packet $\mathcal U_j$, the receiver computes the empirical average BER at round $t$ as $\overline{P}_{b,j,t}=\frac{1}{t}\sum_{i=1}^t P_{b,j,i}$, where $P_{b,j,i}$ is obtained by counting bit errors over the packet payload. 
Within each round, PASAR applies the termination condition in \eqref{termination_con} with the simultaneous budget/active-set updates in \eqref{update1}--\eqref{update2} until no packet satisfies \eqref{termination_con}. The greedy refinement phase is then executed by sorting the remaining packets according to $\alpha s_j\overline{P}_{b,j,t}$ in ascending order and terminating the smallest ones until the residual budget is exhausted. The resulting residual budget is carried to the next round. For a given model, we set $\beta_{\mathrm{total}}$ as the cumulative loss when the assembled model reaches the target accuracy: $95\%$ for LeNet-5 (MNIST) and $85\%$ for ShuffleNetV2 (CIFAR-10).

\end{itemize}

Three well known retransmission schemes are used to benchmark the performance of the PASAR design.
\begin{itemize}
    \item \textit{HARQ-Type I}\cite{sheng2017performance}:
        This basic HARQ scheme retransmits the same data upon receiving a NACK, without incorporating additional redundancy or combining past transmissions.
        \item \textit{HARQ-Chase Combining (CC)}\cite{chasecombining}:
       Upon a failed transmission, HARQ-CC retransmits the same encoded data block. The receiver coherently combines the newly received signal with previously stored versions, enhancing decoding reliability through signal diversity. 
        \item \textit{HARQ-Incremental Redundancy (IR)}\cite{chasecombining}:
         HARQ-IR extends the retransmission mechanism by sending new redundancy bits upon each retransmission attempt. Rather than repeating the same data, each retransmission provides additional parity information, which is accumulated at the receiver for joint decoding. This progressive refinement enhances error correction capability and improves the spectral efficiency compared with repetition-based schemes.
    \end{itemize}
    For all HARQ baselines, retransmissions are controlled by a uniform stopping threshold. 
    Specifically, packet $\mathcal U_j$ is declared successfully received (ACK) once its empirical average BER satisfies $\overline{P}_{b,j,t} \le \frac{\beta_{\mathrm{total}}}{\alpha\sum_{j=1}^{J}s_j}$, and otherwise a NACK is returned and the packet will be retransmitted.
    \begin{figure}[t!]
    \centering
    \subfigure[1000 information bits per packet.]{
    \includegraphics[width=0.6\columnwidth]{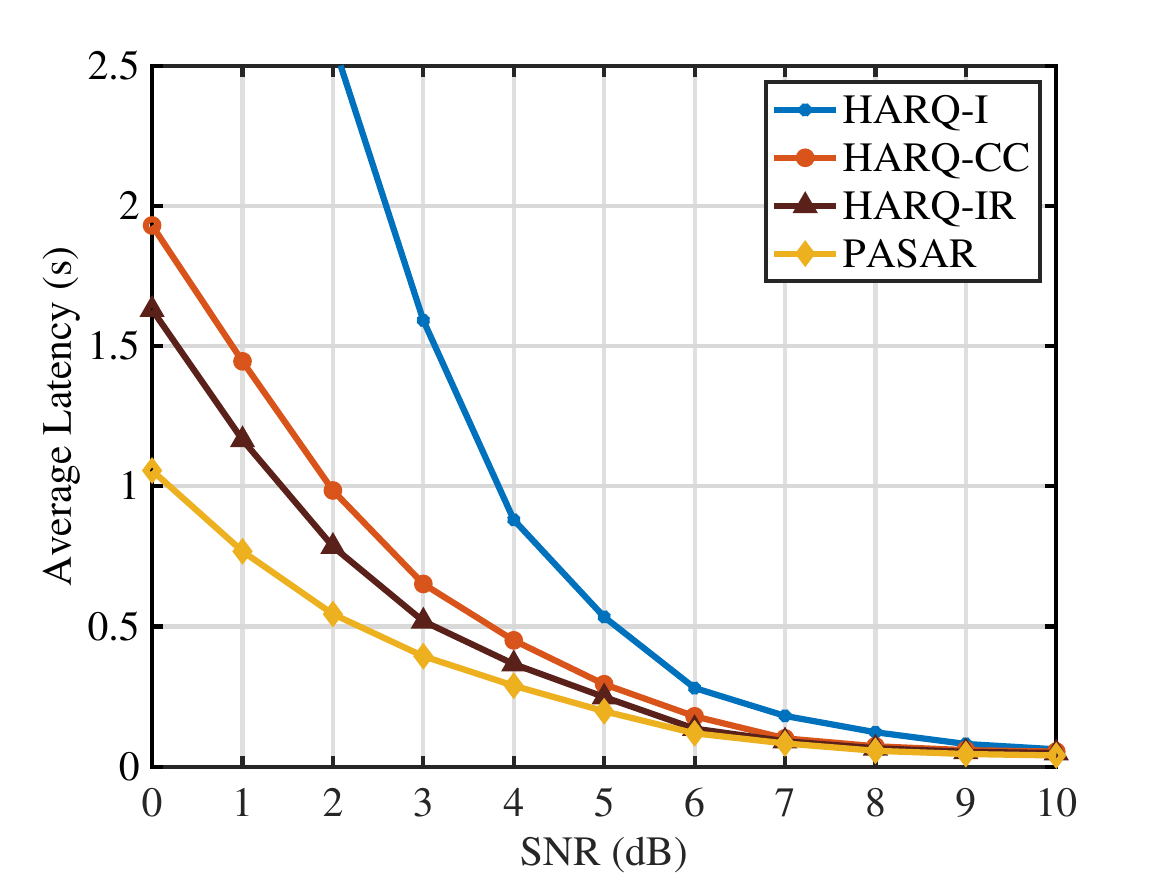}\label{LeNet_200}}\vspace{-1mm}
    \subfigure[500 information bits per packet.]{
    \includegraphics[width=0.6\columnwidth]{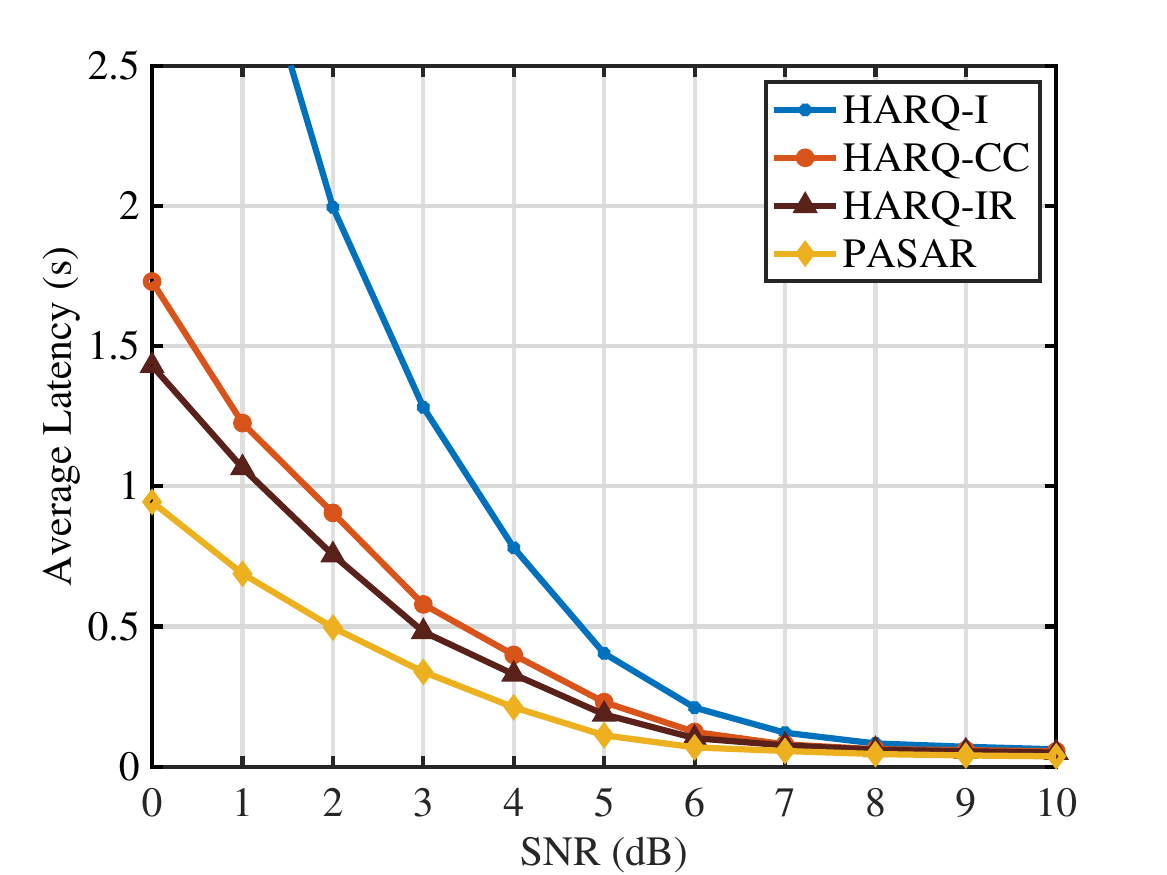}\label{LeNet_100}}\vspace{-1mm}
    \caption{AI downloading latency versus SNR on MNIST using LeNet.}
    \label{ARQ_VS_Sensitivity_LeNet}\vspace{-6mm}
    \end{figure}
    \vspace{-2mm}
    \subsection{Performance of PASAR Protocol}
    We start by presenting the results of the average retransmission latency across different SNR values for the various HARQ schemes, as shown in Fig.\ref{ARQ_VS_Sensitivity_LeNet}. The experiments were conducted on the MNIST dataset using LeNet, with two different configurations: 1000 information bits per packet (Fig.\ref{LeNet_200}) and 500 information bits per packet (Fig.\ref{LeNet_100}).  The proposed PASAR scheme is evaluated against three baselines: HARQ-I, HARQ-CC, and HARQ-IR. At low SNR values, where transmission errors are frequent, PASAR significantly outperforms conventional HARQ schemes. This performance gain arises from its ability to differentiate parameters based on their sensitivity to model performance, enabling retransmissions to focus on critical parameters while avoiding unnecessary retransmissions for less important ones. In contrast, traditional HARQ methods apply a uniform retransmission policy, resulting in redundant transmissions and increased latency. As SNR improves and the likelihood of bit errors decreases, the overall number of retransmissions decreases, narrowing the performance gap between PASAR and the benchmarks. In these high-SNR conditions, the system inherently benefits from more reliable communication, reducing the marginal advantage of sensitivity-aware retransmission.
    
    The comparative results demonstrate that the 500-information-bit packet configuration yields lower average latency and faster convergence compared with the 1000-information-bit configuration. This improvement is attributed to the reduced packet size, which shortens each transmission and decoding cycle, accelerates error detection and correction, and allows retransmission loops to complete more rapidly. As a result, the system benefits from tighter feedback control, more efficient utilization of communication resources, and faster model synchronization across all SNR levels.
    
    \begin{figure}[t!]
    \centering
    \subfigure[1000 information bits per packet.]{
    \includegraphics[width=0.6\columnwidth]{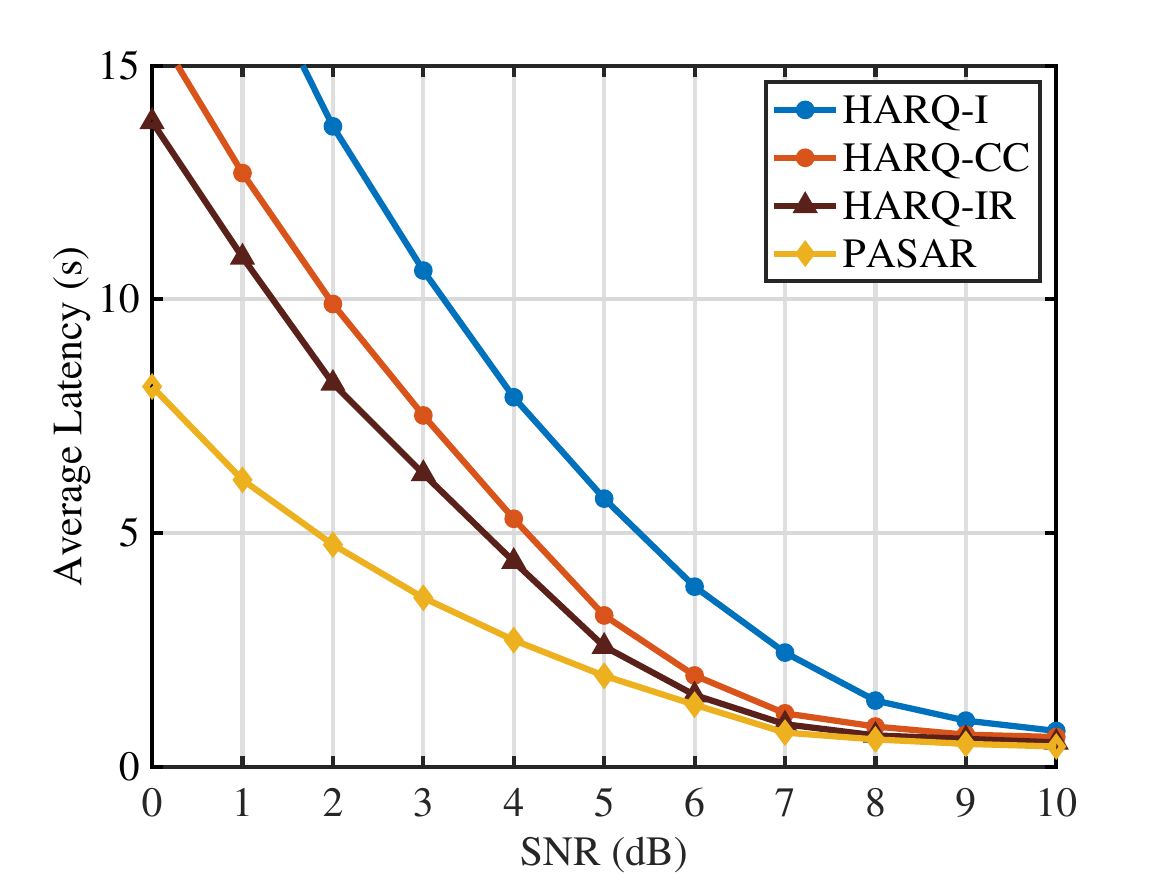}\label{shufflenet_200}}\vspace{-1mm}
    \subfigure[500 information bits per packet.]{
    \includegraphics[width=0.6\columnwidth]{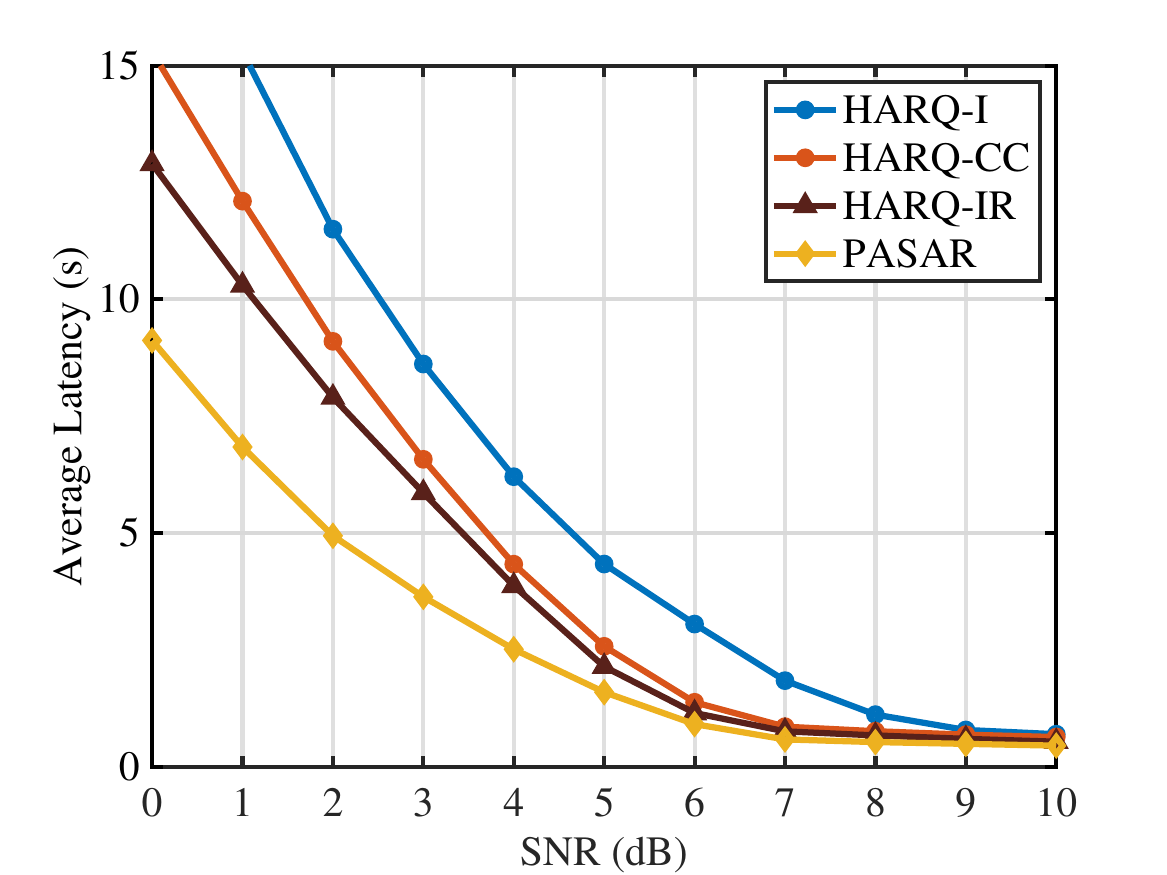}\label{shufflenet_100}}\vspace{-1mm}
    \caption{AI downloading latency versus SNR on Cifar10 using ShuffleNetV2.}
    \label{ARQ_VS_Sensitivity_ShuffleNet}\vspace{-6mm}
    \end{figure} 
    
    We then present the results obtained on the CIFAR-10 dataset using ShuffleNetV2, as shown in Fig.~\ref{ARQ_VS_Sensitivity_ShuffleNet}. Comparing with the MNIST experiments, the average retransmission latency is significantly higher, which can be attributed to the considerably larger parameter set of ShuffleNetV2, resulting in an increase in packet number. In this context, the proposed sensitivity-aware retransmission scheme yields even greater latency reductions relative to HARQ-I, HARQ-CC, and HARQ-IR. This improvement stems from the PASAR framework’s ability to selectively prioritize retransmissions of critical parameters, avoiding unnecessary retransmissions for packets with low sensitivity in high-dimensional models. Notably, as the model size increases and the channel conditions deteriorate, the benefits of PASAR become more pronounced. In such challenging settings, conventional HARQ schemes incur significant overhead due to indiscriminate retransmissions, whereas PASAR maintains efficiency by exploiting parameter importance, resulting in substantial latency reduction.
    
    \begin{figure}[t!]
    \centering
    \includegraphics[width=0.6\columnwidth]{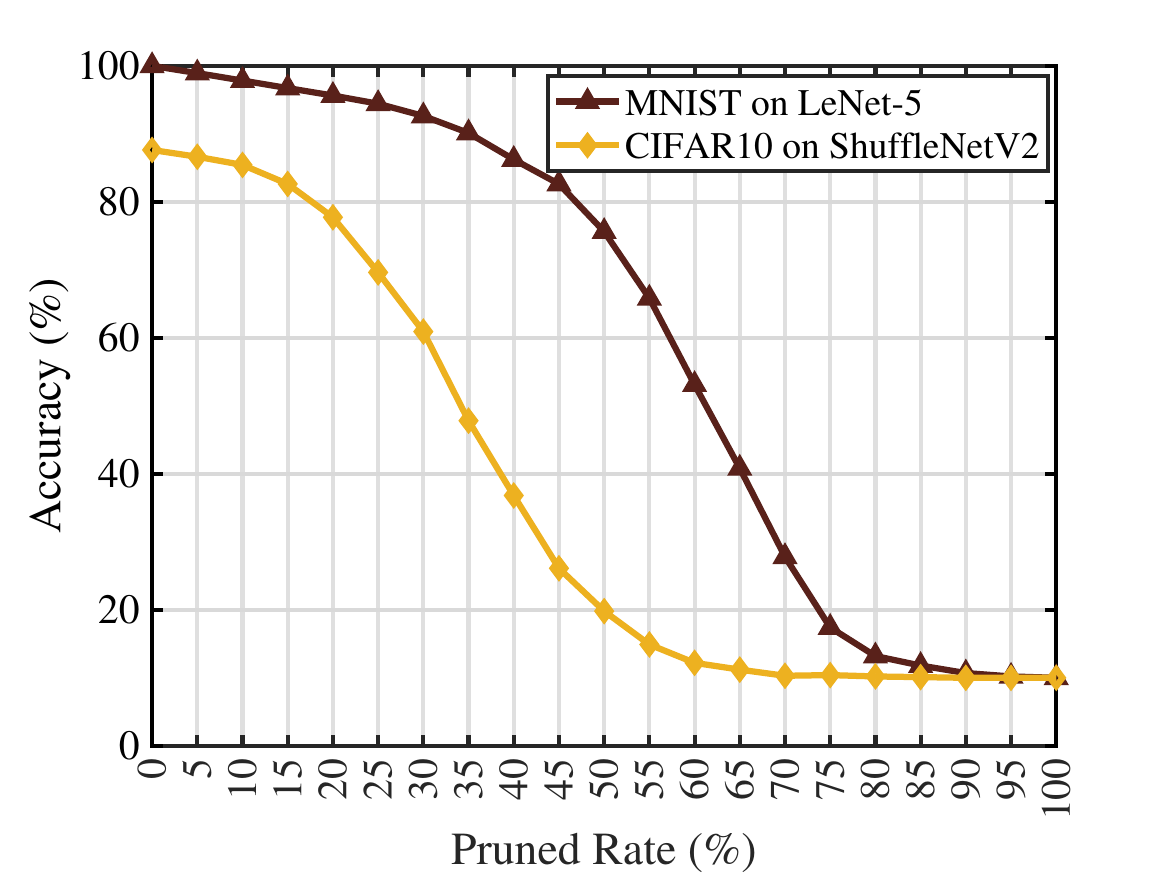}\vspace{-3mm}
    \caption{Model's inference accuracy under different pruning rate.}
    \label{model_accuracy_vs_Pruning}\vspace{-6mm}
    \end{figure}
    \vspace{-3mm}
    \subsection{Effects of Sensitivity Distribution Skewness}
    To investigate how the distribution of parametric sensitivity influences the effectiveness of PASAR, we simulate varying degrees of parameter pruning as a proxy to modulate the sensitivity landscape. As the pruning rate increases, the low-sensitivity parameters are removed, leading to a reduction in the skewness of the sensitivity distribution and a more homogeneous set of remaining parameters. This allows us to systematically examine how the performance gains of PASAR, which arise from its selective retransmission mechanism, are affected by the degree of sensitivity skewness across model parameters. We begin by identifying the maximum allowable pruning rates for each dataset that preserve acceptable model accuracy, which serve as constraints for subsequent evaluations of our PASAR scheme against traditional HARQ baselines. It is important to note that HARQ-I is excluded from this comparison because its significantly higher baseline latency results in disproportionately large performance gains, making it difficult to visualize and interpret meaningful differences in the figure. The packet payload is fixed to 1000 information bits. As shown in Fig.~\ref{model_accuracy_vs_Pruning}, LeNet-5 on MNIST maintains accuracy above 95\% with pruning rates up to 20\%, while ShuffleNetV2 on CIFAR-10 retains accuracy above 85\% up to 10\% pruning. Accordingly, we adopt 20\% and 10\% as the maximum pruning thresholds for the MNIST and CIFAR-10 experiments, respectively. These thresholds ensure that pruning-induced degradation remains within acceptable limits, allowing a fair and effective evaluation of the latency benefits offered by the proposed retransmission framework.
    
    \begin{figure}[t!]
    \centering
    \subfigure[Comparison to HARQ-CC.]{
    \includegraphics[width=0.6\columnwidth]{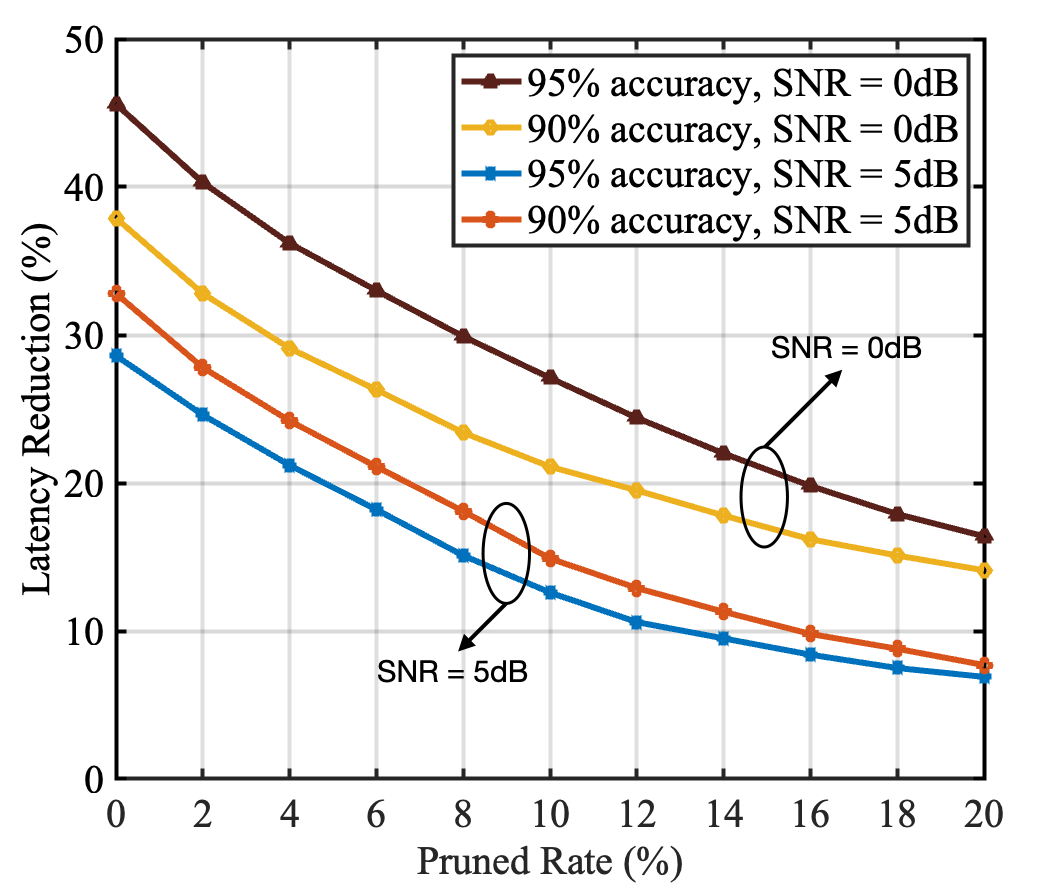}\label{Pruned_Lenet_CC}}\vspace{-1mm}
    \subfigure[Comparison to HARQ-IR.]{
    \includegraphics[width=0.6\columnwidth]{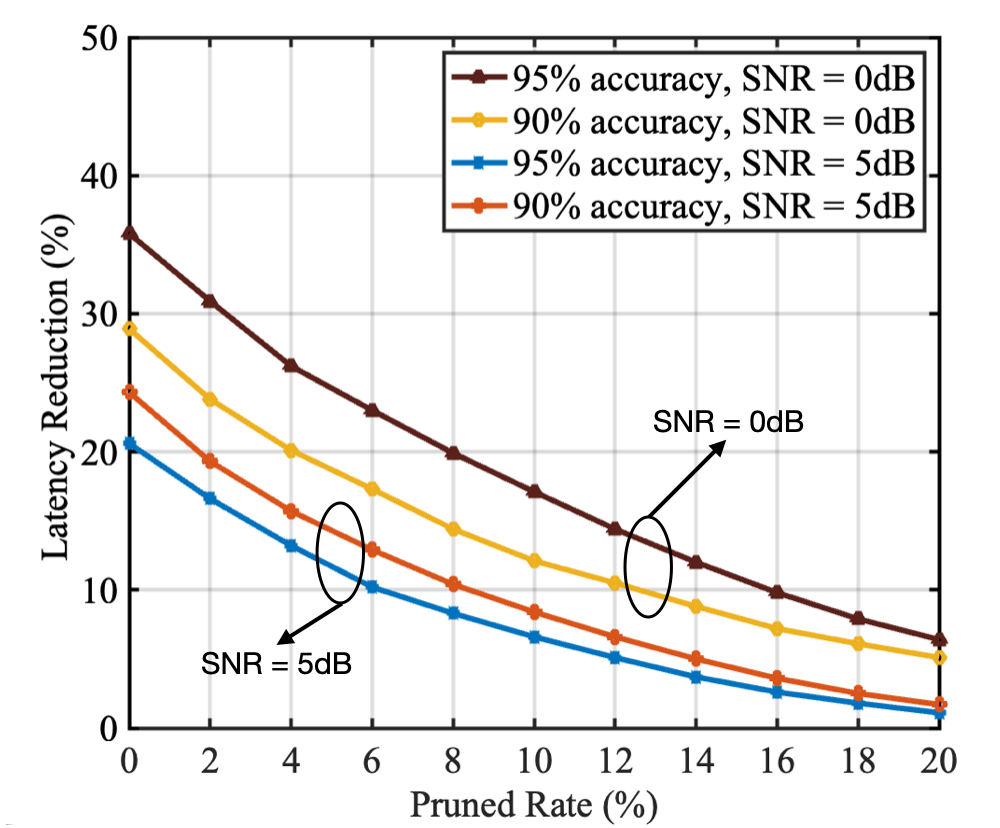}\label{Pruned_Lenet_IR}}\vspace{-1mm}
    \caption{Reduction of retransmission latency versus pruning rate on MNIST with 1000 information bits per packet.}
    \label{Pruned_retransmission_Lenet}\vspace{-6mm}
    \end{figure}
    Figs.~\ref{Pruned_Lenet_CC} and \ref{Pruned_Lenet_IR} illustrate the latency reduction achieved by the proposed PASAR scheme compared with HARQ-CC and HARQ-IR at different pruning rates. At low pruning levels, the sensitivity distribution remains highly skewed. In this case, a small subset of parameters dominates the model's robustness, allowing PASAR to selectively allocate retransmission efforts toward the most critical parameters. In this regime, PASAR achieves substantial latency savings, reducing retransmission cost by approximately 45\% over HARQ-CC and 35\% over HARQ-IR (with 95\% accuracy, 0 dB SNR).
    However, as the pruning rate increases, low-sensitivity parameters are progressively removed, flattening the sensitivity distribution. The remaining parameters tend to exhibit uniformly high importance, thereby reducing the heterogeneity that PASAR leverages for prioritization. Consequently, the retransmission strategy transitions from a skew-aware, importance-driven scheme to a near-uniform policy, leading to diminishing performance gains. At a pruning rate of 20\%, the improvements drop to roughly 17\% (vs. HARQ-CC) and 8\% (vs. HARQ-IR) under 5 dB SNR. 
    These observations highlight that the latency advantage of PASAR is intrinsically tied to the degree of skewness in the sensitivity distribution. Greater heterogeneity in parameter importance provides more opportunities for selective retransmission, thus allowing larger communication latency reduction.
    
    \begin{figure}[t!]
    \centering
    \subfigure[Comparison to HARQ-CC.]{
    \includegraphics[width=0.6\columnwidth]{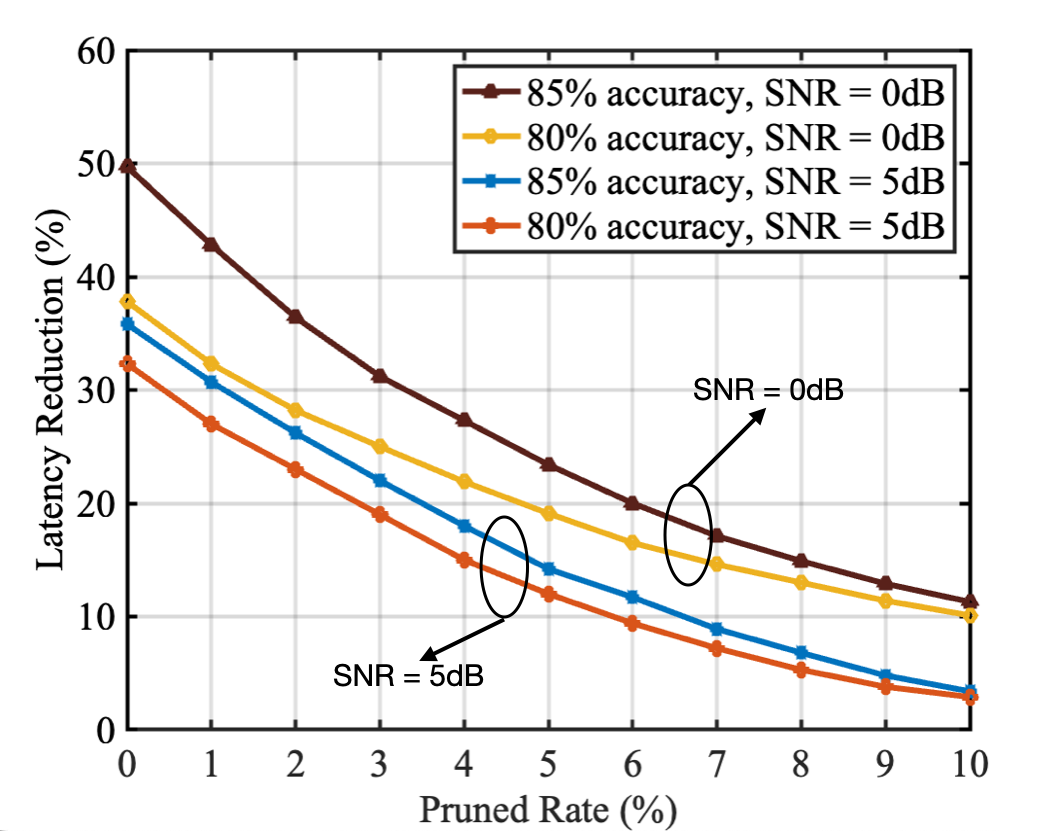}\label{Pruned_Shufflenet_CC}}\vspace{-1mm}
    \subfigure[Comparison to HARQ-IR.]{
    \includegraphics[width=0.6\columnwidth]{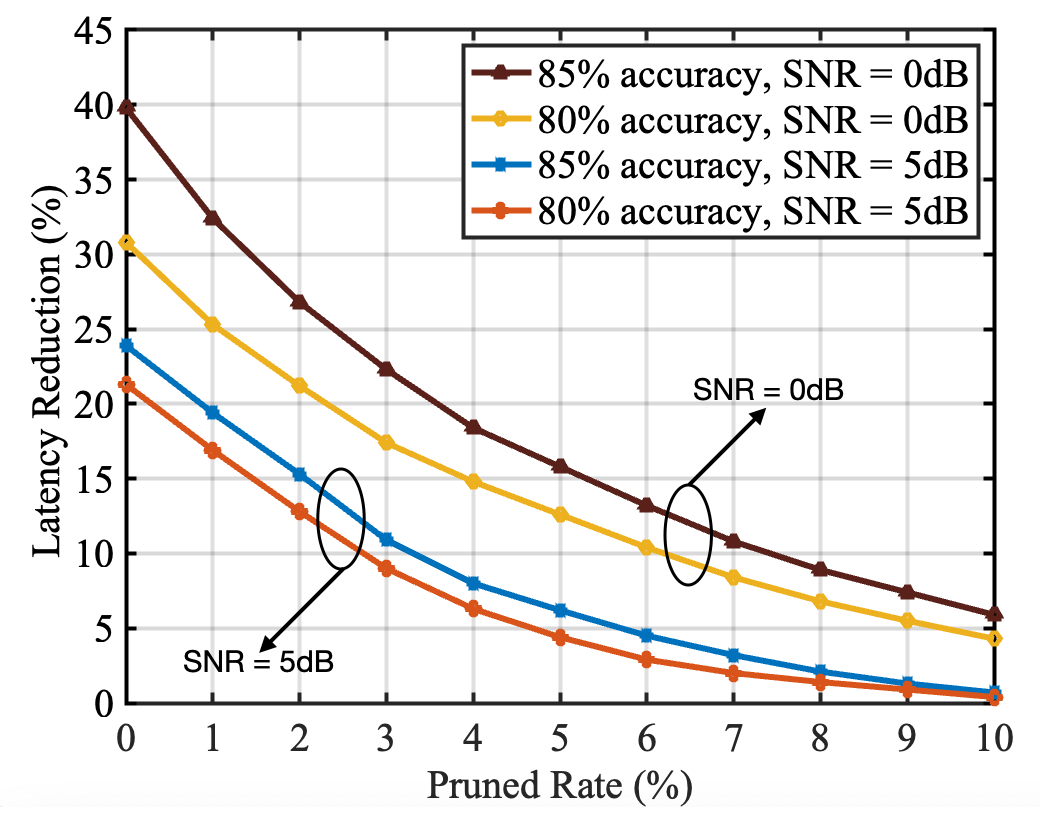}\label{Pruned_Shufflenet_IR}}\vspace{-1mm}
    \caption{Reduction of latency versus pruning rate on CIFAR10 with 1000 information bits per packet.}
    \label{Pruned_retransmission_Shufflenet}\vspace{-6mm}
    \end{figure}
    Fig.~\ref{Pruned_Shufflenet_CC} and Fig.~\ref{Pruned_Shufflenet_IR} present the latency performance of the proposed sensitivity-aware scheme compared with HARQ-CC and HARQ-IR in the CIFAR-10 dataset using ShuffleNetV2 at different pruning rates. Consistent with the MNIST results, the sensitivity scheme achieves progressively greater latency reductions as the pruning rate increases, outperforming traditional HARQ schemes across the board. Notably, the performance gains on CIFAR-10 are more significant, particularly at 0 dB, than those observed on MNIST. This improvement comes from the substantially larger dimension of ShuffleNetV2, which allows a greater number of low-sensitivity parameters to be pruned. As a result, the sensitivity scheme can eliminate a larger number of unnecessary retransmissions, resulting in more significant latency reductions. In contrast, the smaller parameter count of LeNet-5 in MNIST limits the extent of pruning, thus constraining the achievable latency improvement.
    
    As the target accuracy increases, the number of required retransmissions naturally grows, resulting in higher communication latency. Notably, the performance gain of PASAR becomes more pronounced when the accuracy threshold increases from 80\% to 85\% at 0 dB SNR, compared with the same increase at 5 dB. This observation underscores the particular effectiveness of PASAR under challenging channel conditions and stringent performance requirements.
    The key advantage of PASAR lies in its ability to exploit skewness in the parametric sensitivity distribution, where a small subset of parameters dominates model performance. While this skewness is inherent to the model, particularly evident at low pruning rates, its impact on communication efficiency is amplified under adverse conditions such as low SNR, large model dimensionality, or strict reliability requirements. In such settings, traditional uniform retransmission becomes increasingly inefficient. PASAR mitigates this by prioritizing retransmissions for highly sensitive parameters, yielding substantial latency reductions.
    \vspace{-3mm}
    \section{Conclusion}
    In this paper, we present a goal-oriented design of retransmission, namely the PASAR protocol, to support reliable, low latency AI downloading applications. 
    It features a retransmission-control algorithm that exploits the highly skewed distribution of parametric sensitivity to reduce communication latency via prioritizing transmission of critical parameters (with high sensitivity).
    The idea is materialized through designing adaptive stopping thresholds that are aware of parametric sensitivity.
    Through such awareness, the proposed PASAR protocol is shown to significantly outperform traditional retransmission schemes that exploits only channel state information.
    
    To the best of our knowledge, this is the first work to investigate the retransmission framework for AI-model downloading by taking  parametric sensitivity into account. This opens several promising avenues for future research. One direction involves jointly optimizing retransmission-control thresholds and pruning rates (see Section VI) to balance accuracy and latency, which requires deeper investigation into the combined effects of pruning and quantization on these two metrics. Another important extension is to incorporate energy efficiency by designing adaptive power control schemes that consider both channel fading and parametric sensitivity. Finally, extending this framework to distributed edge inference scenarios, potentially integrating over-the-air computation, could further enhance the system scalability and efficiency via sensitivity-awareness.
    \vspace{-6mm}
    \appendix
    \subsection{Proof of Lemma \ref{Lemma0}}\label{Lemma0proof}
    The received distorted value $\hat{w}_d$ is given by:
    \begin{align}
    \hat{w}_d = &\sum_{i=0}^{n-2}[a_i(-1)^{b_i} + b_i]2^i - [a_{n-1}(-1)^{b_{n-1}} + b_{n-1}]2^{n-1},\nonumber
    \end{align}
    where $a_i$ are independent Bernoulli random variables, $a_i \sim \text{Bernoulli}(P_b)$, and $b_i \in \{0, 1\}$ are the original bits.
    The error for the $d$-th parameter is defined as:
    % \begin{equation}
    $\Delta {w}_d = \hat{w}_d - w_d,$
    % \end{equation}
    where $w_d$ is the true value of the parameter:
    \begin{equation}
    w_d = \sum_{i=0}^{n-2} b_i 2^i - b_{n-1} 2^{n-1}.
    \end{equation}
    The error in the received value is:
    \begin{equation}
    \Delta {w}_d = \sum_{i=0}^{n-2} a_i (-1)^{b_i} 2^i - a_{n-1} (-1)^{b_{n-1}} 2^{n-1}.
    \end{equation}
    Now we calculate the variance of $\Delta {w}_d$. The variance for each bit error $a_i$ is:
    % \begin{equation}
    $\mathrm{Var}(a_i (-1)^{b_i} 2^i) = P_b (1 - P_b) (2^i)^2.$
    % \end{equation}
    Since the $a_i$ are independent, the total variance is:
    \begin{align}
    \mathrm{Var}(\Delta {w}_d) &= \sum_{i=0}^{n-2} P_b (1 - P_b) (2^i)^2 + P_b (1 - P_b) (2^{n-1})^2\nonumber\\
    &= P_b (1 - P_b) \left( \sum_{i=0}^{n-2} 4^i + 4^{n-1} \right).
    \end{align}
    The sum of the powers of 4 is a geometric series:
    % \begin{equation}
    $\sum_{i=0}^{n-2} 4^i = \frac{4^{n-1} - 1}{3}.$
    % \end{equation}
    Substituting this into the variance expression, we have
    \begin{align}
    \mathrm{Var}(\Delta {w}_d) &= P_b (1 - P_b) \left( \frac{4^{n-1} - 1}{3} + 4^{n-1} \right)\\
    &= P_b (1 - P_b) \frac{4^n - 1}{3}.
    \end{align}
    For small values of $P_b$, we approximate $P_b (1 - P_b) \approx P_b$. Therefore, we get the final result:
    \begin{equation}
    \mathrm{Var}[\Delta {w}_d] \approx \frac{4^n - 1}{3} P_b.
    \end{equation}
    Here, we follow the fact that the zero mean of the expected variation of the parameter that $\mathbb{E}[\Delta w_d]=0, \quad d\in\{1,\cdots,D\}$.
    Given that each parameter is encoded using 
    $n$-bit representation and subject to a BER $P_b$, the loss of the downloaded model can be derived as: 
    \begin{align} 
    \mathbb{E}\left[\Delta f(\mathbf{w})\right] &\approx \nonumber \frac{1}{2}\sum_{d=1}^{D}\frac{\partial^2 f}{\partial w_d^2}\cdot \mathbb{E}\left[\left(\Delta {w}_d\right)^2\right] \nonumber\\&= \frac{1}{2}\sum_{d=1}^{D}\frac{\partial^2 f}{\partial w_d^2}\cdot \Big\{\mathbb{E}\left[\left(\Delta {w}_d\right)^2\right]-\mathbb{E} [\Delta(w_d)]^2\Big\} \nonumber\\&= \sum_{d=1}^{D}\frac{\partial^2 f}{\partial w_d^2}\cdot \frac{4^n-1}{6} P_{b}.
    \end{align}
    Let $\{\mathcal U_j\}_{j=1}^J$ be a partition of $\{1,\ldots,D\}$ into packets and assume all $d\in\mathcal U_j$ share the same packet-level BER $P_{b,j}$. So,\vspace{-1.5mm}
    \begin{align}
    \mathbb{E}[\Delta f(\mathbf w)]
    &\approx \alpha \sum_{j=1}^J \sum_{d\in\mathcal U_j}\frac{\partial^2 f}{\partial w_d^2}\,P_{b,j}
    = \alpha \sum_{j=1}^J s_j P_{b,j},
    \end{align}\vspace{-3mm}
    where $\alpha=\frac{4^n-1}{6}$ and $s_j\triangleq\sum_{d\in\mathcal U_j}\frac{\partial^2 f}{\partial w_d^2}$.
    
    \vspace{-2mm}
    \subsection{Proof of Lemma \ref{Lemma3}}\label{lemma3proof}
    Fix round $t$ and define costs $c_j \triangleq \alpha s_j \overline P_{b,j,t}$ for all $\mathcal U_j\in\mathcal V_t$, with residual budget $B\triangleq \beta_{\mathrm{res},t}$.
    Let $c_{(1)}\le \cdots \le c_{(n)}$ be the sorted costs, $n=|\mathcal V_t|$, and define
    $k^\star \triangleq \max\Big\{k:\ \sum_{i=1}^k c_{(i)}\le B\Big\}.$ For any subset $\mathcal S\subseteq \mathcal V_t$ with $|\mathcal S|=k$, we have
    $\sum_{\mathcal U_j\in\mathcal S} c_j \ge \sum_{i=1}^k c_{(i)}$, since the RHS is the sum of the $k$ smallest costs. Hence, no feasible set can have size $>k^\star$, while the $k^\star$ smallest-cost packets are feasible. Therefore, selecting packets in nondecreasing order of $c_j$ until the budget is exhausted maximizes $|\mathcal S|$.
    
    At epoch $\ell$, let $B_\ell=\beta_{\mathrm{res},t,\ell}$ and $n_\ell=|\mathcal V_{t,\ell}|$, and define $\tau_\ell \triangleq B_\ell/n_\ell$.
    Phase 1 removes all packets with $c_j\le \tau_\ell$ and update $(B_\ell,n_\ell)$.
    Since each removed packet has a cost of at most $\tau_\ell$, their total cost is at most $B_\ell$; any sorting-based greedy rule would select all packets with $c_j\le \tau_\ell$ before any packet with a cost of $>\tau_\ell$ (up to tie-breaking). Hence, removing them yields the same reduced instance as continuing the sorting-greedy approach after taking them.
    Iterating until no $c_j\le \tau_\ell$ produces $(\mathcal V'_t,B')$, and greedy refinement in Phase 2 is finally applied to $\mathcal V'_t$ by selecting the smallest remaining $c_j$ until $B'$ is exhausted, i.e., exactly the continuation of sorting-greedy on the reduced instance.
    Thus, the final termination set is identical to sorting-greedy (with a fixed tie-breaking rule), and it is optimal for maximizing the number of terminations.
    % This completes the proof.
% \vspace{-4mm}
    \bibliographystyle{IEEEtran}
    \bibliography{Ref}
    \end{document}